\documentclass[letterpaper,11pt]{article} 
\usepackage[utf8]{inputenc}

\usepackage{amsmath,amssymb,amstext,amsthm}
\usepackage{algpseudocode,algorithm}
\usepackage[in]{fullpage} 
\usepackage{booktabs}
\usepackage{xspace}
\usepackage{pgfplots}
\pgfplotsset{compat=1.18}
\usepackage{authblk}

\newtheorem{theorem}{Theorem}

\newtheorem{lemma}{Lemma}
\newtheorem{corollary}{Corollary}

\newtheorem{definition}{Definition}

\usepackage{todonotes}

\usepackage{hyperref}

\newcommand{\cF}{\ensuremath{\mathcal{F}}\xspace}
\newcommand{\cFd}{\ensuremath{\mathcal{F}\downarrow}\xspace}
\newcommand{\bitsize}{N}
\newcommand{\set}[1]{\left\{#1\right\}}

\title{Quantum Algorithms for Graph Coloring and other Partitioning, Covering, and Packing Problems}
\author[1]{Serge Gaspers}
\author[2]{Jerry Zirui Li} 
\affil[1]{UNSW Sydney, Australia, \href{mailto:serge.gaspers@unsw.edu.au}{serge.gaspers@unsw.edu.au}}
\affil[2]{James Ruse Agricultural High School. Sydney, Australia, \href{mailto:zirui.li7@gmail.com}{zirui.li7@gmail.com}}

\date{}

\begin{document}

\maketitle

\begin{abstract}
    Let $U$ be a universe on $n$ elements, let $k$ be a positive integer, and let $\cF$ be a family of (implicitly defined) subsets of $U$.
    We consider the problems of partitioning $U$ into $k$ sets from $\cF$, covering $U$ with $k$ sets from $\cF$, and packing $k$ non-intersecting sets from $\cF$ into $U$.
    Classically, these problems can be solved via inclusion--exclusion in $2^n n^{O(1)}$ time \cite{BjorklundHK09}.
    Quantumly, there are faster algorithms for graph coloring with running time $O(1.9140^n)$ \cite{ShimizuM22} and for Set Cover with a small number of sets with running time $O\left(1.7274^n |\cF|^{O(1)}\right)$ \cite{AmbainisBIKPV19}.
    In this paper, we give a quantum speedup for Set Partition, Set Cover, and Set Packing whenever there is a classical enumeration algorithm that lends itself to a quadratic quantum speedup, which, for any subinstance on a set $X\subseteq U$, enumerates at least one member of a $k$-partition, $k$-cover, or $k$-packing (if one exists) restricted to (or projected onto, in the case of $k$-cover) the set $X$ in $c^{|X|} n^{O(1)}$ time with $c<2$.
    Our bounded-error quantum algorithm runs in time $(2+c)^{n/2} n^{O(1)}$ for Set Partition, Set Cover, and Set Packing. It is obtained by combining three algorithms that have the best running time for some values of $c$. When $c\le 1.147899$, our algorithm is slightly faster than $(2+c)^{n/2} n^{O(1)}$; when $c$ approaches $1$, it matches the $O\left(1.7274^n |\cF|^{O(1)}\right)$ running time of \cite{AmbainisBIKPV19} for Set Cover when $|\cF|$ is subexponential in $n$.
    
    For covering, packing, and partitioning into maximal independent sets, maximal cliques, maximal bicliques, maximal cluster graphs, maximal triangle-free graphs, maximal cographs, maximal claw-free graphs, maximal trivially-perfect graphs, maximal threshold graphs, maximal split graphs, maximal line graphs, and maximal induced forests, we obtain bounded-error quantum algorithms with running times ranging from $O(1.8554^n)$ to $O(1.9629^n)$.
    Packing and covering by maximal induced matchings can be done quantumly in $O(1.8934^n)$ time.
    
    For Graph Coloring (covering with $k$ maximal independent sets), we further improve the running time to $O(1.7956^n)$ by leveraging faster algorithms for coloring with a small number of colors to better balance our divide-and-conquer steps.
    For Domatic Number (packing $k$ minimal dominating sets), we obtain a $O((2-\varepsilon)^n)$ running time for some $\varepsilon > 0$.
    
    Several of our results should be of interest to proponents of classical computing:
    \begin{itemize}
        \item We present an inclusion-exclusion algorithm with running time $O^*\left(\sum_{i=0}^{\lfloor \alpha n \rfloor} \binom{n}{i}\right)$, which determines, for each $X\subseteq U$ of size at most $\alpha n$, $0\le \alpha \le 1$, whether $(X,\cF)$ has a $k$-cover, $k$-partition, or $k$-packing. This running time is best-possible, up to polynomial factors.
        \item We prove that for any linear-sized vertex subset $X\subseteq V$ of a graph $G=(V,E)$, the number of minimal dominating sets of $G$ that are subsets of $X$ is $O((2-\varepsilon)^{|X|})$ for some $\varepsilon > 0$.
    \end{itemize}
\end{abstract}

\newpage

\section{Introduction}

Graph Coloring is an example of a problem requiring to partition an $n$-element set $U$ into $k$ sets from a family $\cF$. In this case $U$ is the vertex set of a graph $G$ and $\cF$ is implicitly defined as the independent sets of $G$. We can also view Graph Coloring as a covering problem where the vertex set needs to be covered with $k$ maximal independent sets.

In 2006, \cite{BjorklundH06} and \cite{Koivisto06} independently solved Graph Coloring in $O^*(2^n)$ time via a new inclusion--exclusion approach, along with other partitioning and covering problems. The approach has been used for packing problems as well, and has been generalized to solve more generic subset convolution problems \cite{BjorklundHKK07,BjorklundHKK11,BjorklundHK09}.

In this work, we give faster quantum algorithms for a range of partitioning, covering, and packing problems, including Graph Coloring and Domatic Number. To do this, we use the framework of \cite{AmbainisBIKPV19} where a preprocessing step computes solutions to small subinstances and stores them in QRAM. These solutions are then accessed by a divide-and-conquer algorithm which enjoys a quadratic speedup in quantum models of computation via techniques such as Grover's search \cite{Grover96}. For the preprocessing step (\autoref{sec:pre}), we adapt the afore-mentioned inclusion-exclusion approach \cite{BjorklundHK09} to compute the solutions to all subinstances induced by a small subset of $U$ up to roughly $n/4$ elements. For the divide-and-conquer step (\autoref{sec:divide}) one would ideally like to divide $U$ into two halves; unfortunately, optimal partitions\footnote{For conciseness, we will mainly discuss partitions in the introduction. The treatment of covers and packings is similar.} may not allow for such a balanced split. However, we can restrict the divide-and-conquer step to divide $U$ into two parts where in the larger part (equivalently, in both parts) the removal of one set of an optimal partition results in at most $n/2$ elements. Finding one set of the optimal partition is done via an algorithm that enumerates all relevant candidate sets; for Graph Coloring, it enumerates the maximal independent sets in the graph induced by the subset $X \subseteq U$ under consideration. Importantly, we need that this enumeration can be done in $O^*\left(c^{|X|}\right)$ time, for some $c<2$ by a classical algorithm that has a quadratic quantum speedup, so that after two levels of divide-and-conquer, the overall quadratic quantum speedup outperforms the classical $O^*(2^n)$ running time.

Our algorithm differs from the previously fastest quantum algorithm for Graph Coloring \cite{ShimizuM22} in both the preprocessing step and the divide-and-conquer step. Our preprocessing step is deterministic and its running time is optimal, matching the size of the output up to polynomial factors; the preprocessing step of \cite{ShimizuM22} is a bounded-error quantum algorithm whose running time is a multiplicative factor of $3^{|X|/6}$ slower than ours for each small (up to size roughly $n/4$) subset $X\subseteq U$. For the divide-and-conquer step, our divide-then-enumerate strategy is described above; \cite{ShimizuM22} employ an enumerate-then-divide strategy, where the enumeration is done on the set to be divided and the remainder is then divided into two sets of size at most half the original set. It turns out that blending the the divide-then-enumerate and the enumerate-then-divide strategy gives faster algorithms when $c \le 1.147899$ (\autoref{sec:smallc}). For $c\le 1.0872$ case, we also use a third level of divide-and-conquer, and when $c$ approaches $1$, our $O(1.7274^n)$ running time matches the running time for Set Cover with a subexponential number of sets of \cite{AmbainisBIKPV19}.

This gives improved algorithms for a range of partitioning, covering, and packing problems (\autoref{sec:applications}). We further improve the running time for Graph Coloring (\autoref{sec:chromatic}) to $O(1.7956^n)$ by leveraging faster algorithms for a small number of colors \cite{ShimizuM22}. Our algorithm considers large subsets of vertices ($\ge 0.48 n$) an checks whether they are $5$-colorable, $6$-colorable with no large $5$-colorable subset, and in some cases $7$-colorable via a new $7$-Coloring algorithm that relies on the preprocessing step. The advantage of excluding such cases from further consideration is that we can make the divide-and-conquer steps more balanced.

For Domatic Number, at first glance it seems that our approach cannot be used. The issue is that when considering a vertex subset $X$, even though there is an algorithm that enumerates all minimal dominating sets of $G[X]$ in $O(1.7159^n)$ time \cite{FominGPS08}, this is insufficient for our purposes: we need to enumerate minimal vertex subsets of $X$ that dominate all of $G$, not just $G[X]$, in $O^*\left(c^{|X|}\right)$ time for some $c<2$. In \autoref{sec:dom}, we show that such an enumeration algorithm (with a quadratic quantum speedup) indeed exists provided that $|X|=\Omega(n)$. This then also gives a bounded-error quantum algorithm for Domatic Number running in $O((2-\varepsilon)^n)$ time.







\section{Preliminaries}

For a proposition $P$, the \emph{Iverson bracket} $[P]$ is a function that returns $1$ if $P$ is true and $0$ otherwise.

\paragraph{Asymptotic notation.}
The $O^*$-notation is similar to the usual $O$-notation but allows to hide polynomial factors in the input size. The $\Tilde{O}$-notation hides polylogarithmic factors.
We make heavy use of Stirling's approximation for factorials, which implies that $\binom{n}{k} = O^*\left(\left(\frac{n}{k}\right)^k \cdot \left( \frac{n}{n-k}\right)^{n-k}\right)$, and of the binomial theorem, $\sum_{k=0}^n \binom{n}{k} x^k y^{n-k} = (x+y)^n$.

\paragraph{Set Systems.}
An {\em implicit set system} \cite{FominGLS19} is a function $\Phi$ that takes as input a string $I \in \{0,1\}^*$ and outputs a set system $(U_I, \cF_I)$, where $U_I$ is a universe and $\cF_I$ is a collection of subsets of $U_I$. The string $I$ is referred to as an \emph{instance} and we denote by $|U_I| = n$ the size of the universe and by $|I|=\bitsize$ the size of the instance. 
We assume that $\bitsize\ge n$.
The implicit set system $\Phi$ is said to be \emph{polynomial time computable} if (a) there exists a polynomial time algorithm that given $I$, produces $U_I$, and (b) there exists a polynomial time algorithm that given $I$, $U_I$ and a subset $S$ of $U_I$, determines whether $S \in \cF_I$.
Throughout this paper, we consider only polynomial time computable implicit set systems.
We define a \emph{subset polynomial time computable} implicit set system $\Phi$ to be a polynomial time computable set system, where (c) there exists a polynomial time algorithm that given $I$, $U_I$ and a subset $S$ of $U_I$, determines whether $S \subseteq S'$ for some $S'\in \cF_I$. This is equivalent to determining whether $S\in \cFd$, where the downward closure $\cFd$ of $\cF$ contains all sets in $\cF$ and their subsets.

For any subset of elements $X\subseteq U$, an ordered tuple $(S_1,\dots,S_k)$ of $k$ sets from \cF is a \emph{$k$-cover} for $X$ if the union of these sets is $X$; it is a \emph{$k$-packing} for $X$ if the $S_i$'s are contained in $X$ and are pairwise non-intersecting; it is a \emph{$k$-partition} for $X$ if it is both a $k$-cover and a $k$-packing for $X$.

For a subset polynomial time computable implicit set system $\Phi$, the input of the $\Phi$-Set Cover problem is an instance $I$ and an integer $k$, and the question is whether the set system $\Phi(I)=(U_I,\cF_I)$ has a $k$-cover. This is equivalent to asking whether $(U_I,\cF_I\downarrow)$ has a $k$-cover and therefore, we assume that $\cF_I = \cF_I\downarrow$ whenever discussing $k$-covers. For a polynomial time computable implicit set system $\Phi$, the input of the $\Phi$-Set Partition and $\Phi$-Set Packing problem is an instance $I$ and an integer $k$, and the question is whether the set system $\Phi(I)=(U_I,\cF_I)$ has a $k$-partition or $k$-packing, respectively. We generally omit $\Phi$ and the subscript $I$ when they are clear from context.

\paragraph{Graphs.}
In a graph $G=(V,E)$, the \emph{open neighborhood} of a vertex $v$, denoted $N_G(v)$ is the set containing all vertices adjacent to $v$ in $G$.
The \emph{closed neighborhood} of $v$ in $G$ also contains $v$ itself, and is denoted $N_G[v] = \{v\} \cup N_G(v)$. Again, we may omit the subscript $G$.
For a vertex subset $X\subseteq V$, the graph $G-X$ is obtained from $G$ by removing the vertices in $X$ and all their incident edges; the graph $G[X]$ induced on $X$ is the graph $G-(V\setminus X)$.

\paragraph{Quantum Algorithms.}
It is known that most classical branching algorithms have a quadratic speedup on quantum machines. As \cite{ShimizuM22}, we also rely on the following results.

\begin{theorem}[\cite{Grover96,BoyerBHT98}]
    Let $A : \{1, 2, \dots , N \} \rightarrow \{0, 1\}$ be a bounded-error quantum algorithm with running time $T$. Then, there is a bounded-error quantum algorithm computing $\bigvee_{x\in\{1,\dots,N\}} A(x)$ with running time $\Tilde{O}(\sqrt{N}T)$.
\end{theorem}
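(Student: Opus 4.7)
The plan is to prove this by combining the Boyer--Brassard--H\o yer--Tapp (BBHT) variant of Grover's search with a standard amplification of the bounded-error subroutine $A$. The main obstacle is that amplitude amplification is usually formulated with an exact Boolean oracle, whereas here each invocation of $A(x)$ may err with some constant probability $p<1/2$; plugging $A$ directly into a Grover iteration could spoil the interference pattern that delivers the quadratic speedup, so one must first reduce to an effectively-exact oracle.

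The first step is to construct an amplified algorithm $A'$ by coherently running $A$ on the same input $O(\log N)$ times and returning the majority answer. A Chernoff bound gives $\Pr[A'(x) \ne A(x)] \le 1/N^{3}$, and $A'$ has running time $O(T \log N)$. The amplification must be implemented reversibly, with all intermediate workspace uncomputed, so that $A'$ serves as a clean phase oracle inside amplitude amplification. This is the only place where we interact with the internals of $A$, and it is what the $\tilde O$ ultimately hides.

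The second step is to invoke BBHT search with oracle $A'$. BBHT handles the case where the number of marked inputs is unknown: it repeatedly runs Grover iterates with geometrically growing rotation counts and, in $O(\sqrt{N})$ expected queries, either outputs some $x$ with $A'(x)=1$ or certifies that no such $x$ exists, with high constant probability. Each oracle call costs $O(T \log N)$, giving an overall running time of $\tilde O(\sqrt{N}\,T)$. I would quote the BBHT analysis (invariant $2$-dimensional subspace, rotation-angle argument) rather than reproduce it, since it is orthogonal to the part that actually needs new care, namely the noisy-oracle reduction.

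The final step is a union bound over oracle queries. Since BBHT makes $O(\sqrt{N})$ queries, the probability that $A'$ ever disagrees with $A(\cdot)$ on a queried input is at most $O(\sqrt{N}) \cdot N^{-3} = o(1)$. Conditioned on this global event, BBHT correctly computes $\bigvee_{x \in \{1,\dots,N\}} A(x)$ with high constant probability, and combining the two failure events yields the claimed bounded error. The hard part is precisely the bounded-error-to-exact-oracle reduction above; everything else is a black-box application of known quantum primitives.
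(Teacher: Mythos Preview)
The paper does not prove this theorem at all: it is stated in the Preliminaries with citations to \cite{Grover96,BoyerBHT98} and used as a black box thereafter. There is therefore no paper proof to compare your proposal against.

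That said, your sketch is correct and is the standard way to derive the statement. Majority-amplifying $A$ to error $1/\mathrm{poly}(N)$ so that it behaves as an (effectively) exact phase oracle, then running BBHT search with an unknown number of marked items, and finally union-bounding the per-query error over $O(\sqrt{N})$ queries, yields the claimed $\tilde O(\sqrt{N}\,T)$ bound with constant success probability. The only point worth flagging is that the $\log N$ overhead from amplification is not strictly necessary---H{\o}yer, Mosca, and de~Wolf showed that Grover search with a bounded-error oracle still achieves $O(\sqrt{N})$ queries without prior amplification---but since the theorem as stated already absorbs polylog factors into $\tilde O$, your simpler reduction is entirely adequate.
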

\begin{theorem}[\cite{DurrH96}]
    Let $A : \{1, 2, \dots , N \} \rightarrow \{0, 1\}$ be a bounded-error quantum algorithm with running time $T$. Then, there is a bounded-error quantum algorithm computing $\min_{x\in\{1,\dots,N\}} A(x)$ with running time $\Tilde{O}(\sqrt{N}T)$.
\end{theorem}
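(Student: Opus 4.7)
The plan is to reduce this minimum computation directly to the OR computation from the previous theorem, exploiting that the codomain of $A$ is Boolean. The key observation is that $\min_{x\in\{1,\dots,N\}} A(x) = 0$ if and only if there exists some $x$ with $A(x) = 0$; otherwise the minimum equals $1$. So it suffices to decide this existence question and negate the answer.

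Concretely, I would construct a bounded-error quantum algorithm $\bar{A}$ on the same domain by running $A(x)$ and then applying a Pauli-X gate to its output register, so that $\bar{A}(x) = 1 - A(x)$ with the same error probability as $A$. Its running time is $T + O(1)$. Applying the previous theorem to $\bar{A}$ yields a bounded-error quantum algorithm computing $b := \bigvee_{x\in\{1,\dots,N\}} \bar{A}(x)$ in $\Tilde{O}(\sqrt{N}\,T)$ time, and returning $1 - b$ then gives $\min_x A(x)$ with the same bounded-error guarantee.

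Because the codomain of $A$ is Boolean, there is essentially no obstacle beyond what the previous theorem already overcomes: bounded-error composition and the amplitude-amplification bookkeeping over $\Theta(\sqrt{N})$ queries are all inherited from its statement. This contrasts with the more general minimum-finding algorithm for integer-valued $A$, where one must iteratively lower a running upper bound on the minimum by repeatedly invoking Grover search on the set $\{x : A(x) < \text{current bound}\}$, amplifying success probabilities, and union-bounding over $O(\log N)$ rounds; for the Boolean special case stated here, all of that machinery collapses to a single OR query on the negated oracle.
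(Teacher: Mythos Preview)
Your reduction is correct: since the codomain of $A$ is $\{0,1\}$, we have $\min_x A(x)=0$ iff $\bigvee_x (1-A(x))=1$, and the previous theorem applied to the negated oracle immediately gives the claimed running time. There is no gap.

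As for comparison with the paper: the paper does not prove this theorem at all. It is stated in the preliminaries as a known result with a citation to \cite{DurrH96} and left unproved. The D\"urr--H\o yer algorithm referenced there handles the general case of integer-valued $A$ via the iterative threshold-lowering scheme you describe in your final paragraph; your argument is a strictly simpler special-case proof that exploits the Boolean codomain as literally written in the statement. (It is quite possible the stated codomain $\{0,1\}$ is a copy-paste artifact from the preceding theorem and the authors intended an integer range, in which case the full D\"urr--H\o yer machinery would be needed; but for the statement as written, your one-line reduction to OR suffices.)
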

In our context, $A$ is an algorithm exploring paths in superposition from the root to the leaves of the search tree of a classical branching algorithm.
The amplitudes of this exploration depend on estimates of the sizes of the subtrees, either by relying on an analysis of the classical branching algorithm \cite{Furer08,ShimizuM22} or by on-the-fly estimations \cite{AmbainisK17}.
We speak of a \emph{simple} branching algorithm when the exploration of one root-to-leaf path is independent of the other paths; this excludes, for example, algorithms relying on clause learning, re-use of computation done in earlier branches, and branch-and-bound.
For a simple branching algorithm with running time $O^*(c^n)$, one obtains a bounded-error quantum algorithm with running time $O^*(c^{n/2})$ in this way; we simply say that we apply Grover's search to the branching algorithm.

\section{Preprocessing small subsets}
\label{sec:pre}

For $\alpha\in [0,1]$, a subset $X$ of $U$ is \emph{$\alpha$-small} if $|X|\le \alpha n$.
Denote by $s(n,\alpha) = \sum_{i=0}^{\lfloor \alpha n \rfloor} \binom{n}{i}$ the number of $\alpha$-small subsets of $U$.
In this section, we consider the problem of counting the number of $k$-covers, $k$-packings, and $k$-partitions for each $\alpha$-small subset of $U$.
When considering $k$-covers, we assume that $\cF$ has been replaced by $\cFd$. This is because when we would like to cover a subset of elements of $X$, we may use a set from $\cF$ that also contains elements outside of $X$. Since $\Phi$ is subset polynomial time computable in the $\Phi$-Set Cover problem, we may as well replace $\cF$ by $\cFd$; this makes the discussion of covering, partitioning, and packing problems more uniform.
Our algorithms run in $O^*\left(s(n,\alpha) \right)$ time, which is best possible, since the output is a list of $s(n,\alpha)$ integers.

This section heavily relies on previous $O^*(2^n)$ inclusion-exclusion approaches \cite{BjorklundHKK11,BjorklundHK09} to compute the number of $k$-covers, $k$-packings, and $k$-partitions for $U$ and these results are well-known when $\alpha=1$.

We start by defining the $\alpha$-small zeta transform, which is central to this section.

\begin{definition}\label{def:small-zeta}
    Let $f$ be a function from subsets of the universe $U$ to an algebraic ring $R$.
    The \emph{$\alpha$-small zeta transform} of $f$, denoted $f\zeta_\alpha$ is
    \begin{align*}
        f\zeta_\alpha(X) = \sum_{Y\subseteq X} f(Y)
    \end{align*}
    for any $\alpha$-small $X\subseteq U$.
\end{definition}
The $1$-small zeta transform is also called the \emph{zeta transform} and the $\alpha$-small zeta transform is precisely the restriction of the zeta transform to $\alpha$-small sets.
Throughout this paper we assume that arithmetic operations in the ring $R$ take $O^*(1)$ time and each ring element is represented using $O^*(1)$ space.

\begin{definition}\label{def:small-mobius}
    Let $f$ be a function from subsets of the universe $U$ to an algebraic ring $R$.
    The \emph{$\alpha$-small M\"{o}bius transform} of $f$, denoted $f\mu_\alpha$ is
    \begin{align*}
        f\mu_\alpha(X) = \sum_{Y\subseteq X} (-1)^{|X\setminus Y|} f(Y)
    \end{align*}
    for any $\alpha$-small $X\subseteq U$.
\end{definition}

It is well-known (see, e.g., \cite{FominK10}) that $f \zeta_\alpha \mu_\alpha = f \mu_\alpha \zeta_\alpha = f$ when $\alpha = 1$, and the same is true when $\alpha \ne 1$.

\begin{lemma}
    The $\alpha$-small zeta transform $f\zeta_\alpha$ and the $\alpha$-small M\"{o}bius transform $f\mu_\alpha$ can be computed in $O^*(s(n,\alpha))$ time.
\end{lemma}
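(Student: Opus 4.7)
The plan is to adapt the standard element-by-element subset-sum algorithm of Yates, which computes the (full) zeta transform in $O(n\cdot 2^n)$ time, and observe that it can be restricted to run over only the $\alpha$-small subsets without losing correctness. Order the universe as $U=\{u_1,\ldots,u_n\}$ and set $f_0 := f$. For $i=1,\ldots,n$, define
\begin{equation*}
    f_i(X) \;=\; \begin{cases} f_{i-1}(X) + f_{i-1}(X\setminus\{u_i\}) & \text{if } u_i\in X, \\ f_{i-1}(X) & \text{otherwise.} \end{cases}
\end{equation*}
A straightforward induction on $i$ shows that $f_i(X) = \sum_{Y} f(Y)$, where the sum ranges over all $Y\subseteq X$ with $Y\cap\{u_{i+1},\ldots,u_n\} = X\cap\{u_{i+1},\ldots,u_n\}$; hence $f_n = f\zeta$.

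The key observation is that the update at step $i$ reads $f_{i-1}$ only on $X$ and on $X\setminus\{u_i\}$, both of which are subsets of $X$. In particular, if $X$ is $\alpha$-small then so is $X\setminus\{u_i\}$. Therefore, if we store $f_i$ only on the $\alpha$-small sets and perform the updates only for those $X$, the algorithm remains well-defined and its output on any $\alpha$-small $X$ equals $f\zeta(X) = f\zeta_\alpha(X)$. The running time is $n$ rounds, each processing the $s(n,\alpha)$ many $\alpha$-small sets with $O^*(1)$ ring operations per set (and with enumeration of $\alpha$-small subsets done in polynomial time per set in any standard order), yielding a total of $O(n\cdot s(n,\alpha))\cdot\mathrm{poly}(\bitsize) = O^*(s(n,\alpha))$.

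For the $\alpha$-small M\"obius transform, I would run exactly the same procedure with the sign flipped, i.e.\ replace $f_{i-1}(X) + f_{i-1}(X\setminus\{u_i\})$ by $f_{i-1}(X) - f_{i-1}(X\setminus\{u_i\})$. The analogous inductive invariant shows that the final value on $X$ equals $\sum_{Y\subseteq X}(-1)^{|X\setminus Y|}f(Y) = f\mu(X) = f\mu_\alpha(X)$, and the containment closure of $\alpha$-small sets used above applies verbatim. The assumption that ring operations take $O^*(1)$ time and that ring elements need $O^*(1)$ space ensures that each step is indeed $O^*(1)$, giving the same $O^*(s(n,\alpha))$ bound.

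I do not expect a serious obstacle here: the algorithm and its analysis are classical for $\alpha=1$, and the only new ingredient is noting that the Yates update is downward-closed in its arguments, so restricting the table to $\alpha$-small sets is safe. The mildly delicate point is bookkeeping: enumerating the $\alpha$-small sets in an order that lets the update $X\mapsto X\setminus\{u_i\}$ be looked up in polynomial time (e.g.\ a ranked enumeration plus a polynomial-time unranking routine, or simply a hash table keyed by the characteristic vector restricted to its first $\lceil\alpha n\rceil$ coordinates), but none of these overheads exceed the $\mathrm{poly}(\bitsize)$ slack hidden in the $O^*$.
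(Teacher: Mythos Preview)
Your proposal is correct and follows essentially the same approach as the paper: both restrict Yates's element-by-element recurrence to $\alpha$-small sets, using the downward closure of the update $X\mapsto X\setminus\{u_i\}$ to guarantee that all lookups stay within the stored table. The only cosmetic difference is in the M\"obius part---you flip the sign inside the Yates recurrence directly, whereas the paper reduces $\mu_\alpha$ to $\zeta_\alpha$ via the identity $\mu_\alpha=\sigma_\alpha\zeta_\alpha\sigma_\alpha$ with the odd-negation transform $f\sigma_\alpha(X)=(-1)^{|X|}f(X)$; both variants are standard and yield the same $O^*(s(n,\alpha))$ bound.
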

\begin{proof}
    We start with $f\zeta_\alpha$ and proceed as in Yates's method \cite{Yates37}.
    Consider an arbitrary ordering of the elements of $U = \{v_1, \dots, v_n\}$.
    The algorithm considers each $\alpha$-small $X\subseteq U$ by increasing order of cardinality.
    
    Set $g_0(X) = f(X)$.
    Then, iterate over the elements of $U$ in the ordering fixed above.
    When processing element $v_i$, set
    \begin{align*}
        g_i(X) &= g_{i-1}(X) + [v_i \in X] \cdot g_{i-1}(X\setminus \{v_i\}).
    \end{align*}
    Finally, set $f\zeta_\alpha(X) = g_n(X)$.
    
    Correctness can be shown by induction on $i$ by observing that
    \begin{align*}
        g_i(X) = \sum_{\{v_{i+1},\dots,v_n\}\cap X \subseteq Y \subseteq X} f(Y).
    \end{align*}
    
    For each set $X$ the computation takes $O^*(1)$ time, and the number of sets $X$ to be considered is $s(n,\alpha)$.
    
    To compute $f\mu_\alpha$, we use the fact that $\mu_\alpha = \sigma_\alpha \zeta_\alpha \sigma_\alpha$, where the $\alpha$-small odd-negation transform is
    \begin{align*}
        f\sigma_\alpha(X) = (-1)^{|X|} f(X),
    \end{align*}
    defined for any $\alpha$-small $X\subseteq U$.
    Indeed,
    \begin{align*}
        f\sigma_\alpha \zeta_\alpha \sigma_\alpha(X) &= (-1)^{|X|} \cdot \sum_{Y\subseteq X} (-1)^{|Y|} \cdot f(Y)\\
        & = \sum_{Y\subseteq X} (-1)^{|X|+|Y|} f(Y)\\
        & = \sum_{Y\subseteq X} (-1)^{|X\setminus Y|} f(Y)
    \end{align*}
    since $|X|+|Y|$ and $|X|-|Y| = |X\setminus Y|$ have the same parity.
    The result now follows, because, for a function $g$ and a set $X$, $g\sigma_\alpha(X)$ can be computed in $O^*(1)$ time.
\end{proof}
We refer to these algorithms as the \emph{fast $\alpha$-small zeta transform} and the \emph{fast $\alpha$-small M\"{o}bius transform}.

By inclusion-exclusion, the number of $k$-covers for a subset $X\subseteq U$ is
\begin{align}\label{eq:cover}
    c_k(\cF,X) = \sum_{Y\subseteq X} (-1)^{|X\setminus Y|} a(Y)^k,
\end{align}
where $a(Y)$ is the number of subsets $Z\subseteq Y$ that belong to $\cF=\cFd$.

For the number of $k$-partitions, we use an indeterminate $z$ in the ring $R$ that allows us to keep track of the sum of the cardinalities of the sets in the cover. The number of $k$-partitions for a subset $X\subseteq U$ is given by the coefficient of the monomial $z^{|X|}$ in the polynomial
\begin{align}\label{eq:partition}
    d_k(\cF,X) = \sum_{Y\subseteq X} (-1)^{|X\setminus Y|} \left( \sum_{j=0}^{|Y|} a_j(Y) z^j \right)^k,
\end{align}
where $a_j(Y)$ is the number of size-$j$ subsets $Z\subseteq Y$ that belong to \cF.

For the number of $k$-packings of $X$, we compute the number of $(k+1)$-partitions where the first $k$ members of the $(k+1)$-tuple belong to \cF and the last member is an arbitrary subset of $X$. Noting that $(1+z)^{|Y|} = \sum_{i=0}^{|Y|} \binom{|Y|}{i} z^i$, the number of $k$-packings for $X\subseteq U$ is the coefficient of $z^{|X|}$ in
\begin{align}\label{eq:packing}
    p_k(\cF,X) = \sum_{Y\subseteq X} (-1)^{|X\setminus Y|} (1+z)^{|Y|} \left( \sum_{j=0}^{|Y|} a_j(Y) z^j \right)^k.
\end{align}

\sloppy
The first algorithmic step is to compute the values for $a(Y)$ in \eqref{eq:cover} and the polynomials $\sum_{j=0}^{|Y|} a_j(Y) z^j$ in \eqref{eq:partition} and \eqref{eq:packing}.
Observe that $|Y|\le |X|\le \alpha n$.
To compute the values $a(Y)$ for all $\alpha$-small $Y\subseteq U$, observe that $a(Y) = \sum_{Z\subseteq Y} [Z\in \cF]$ is the $\alpha$-small zeta transform of the indicator function of \cF.
Since the implicit set system is polynomial time computable, the indicator function can be evaluated in polynomial time.
Therefore, the fast $\alpha$-small zeta transform allows us to compute all relevant values of $a(\cdot)$ in $O^*(s(n,\alpha))$ time.
Similarly, the polynomial $\sum_{j=0}^{|Y|} a_j(Y) z^j$ equals $h\zeta_\alpha(Y)$ where $h(Z) = [Z \in \cF] \cdot z^{|Z|}$ and can be computed in the same time bound by the fast $\alpha$-small zeta transform.

The second algorithmic step is to to use the fast $\alpha$-small M\"{o}bius transform and apply it to the functions that associate with each $\alpha$-small $Y\subseteq U$ the values $a(Y)^k$; $\left(\sum_{j=0}^{|Y|} a_j(Y) z^j\right)^k$; and $(1+z)^{|Y|}\left(\sum_{j=0}^{|Y|} a_j(Y) z^j\right)^k$, respectively.

We conclude that $c_k(\cF,X)$, $d_k(\cF,X)$, and $p_k(\cF,X)$ for each $\alpha$-small $X\subseteq U$ can be computed in $O^*(s(n,\alpha))$ time.

\begin{theorem}\label{thm:preprocessing}
    Given a polynomial time computable implicit set family $\Phi(I)=(U,\cF)$ with $|U|=n$, there is a $O^*(s(n,\alpha))$ time algorithm which determines, for all $k \le \alpha n$ and all $\alpha$-small $X\subseteq U$, whether $(X,\cF)$ has a $k$-cover (if we assume that $\cF$ is closed under subsets), $k$-partition, or $k$-packing.
\end{theorem}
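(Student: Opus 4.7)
The plan is to execute the two-step recipe already sketched in the section, then verify that iterating it over all $k\in\{1,\dots,\lfloor\alpha n\rfloor\}$ only incurs a polynomial overhead.

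In the first step, I would apply the fast $\alpha$-small zeta transform twice. Once to the indicator function $[Y\in\cF]$, which by polynomial-time computability of $\Phi$ can be evaluated in $O^*(1)$ time, producing $a(Y)$ for every $\alpha$-small $Y\subseteq U$ in $O^*(s(n,\alpha))$ time. And once to $h(Z)=[Z\in\cF]\cdot z^{|Z|}$, valued in the truncated polynomial ring $R_\alpha=\mathbb{Z}[z]/(z^{\lfloor\alpha n\rfloor+1})$, producing the polynomials $\sum_{j=0}^{|Y|} a_j(Y) z^j$ for every $\alpha$-small $Y$ in the same time bound.

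In the second step, for each $k\in\{1,\dots,\lfloor\alpha n\rfloor\}$ I would compute, at every $\alpha$-small $Y$, the quantities $a(Y)^k$, $\bigl(\sum_j a_j(Y) z^j\bigr)^k$, and $(1+z)^{|Y|}\bigl(\sum_j a_j(Y) z^j\bigr)^k$ (the latter two multiplied in $R_\alpha$), and then apply the fast $\alpha$-small M\"obius transform to each of these three functions. By equations \eqref{eq:cover}, \eqref{eq:partition}, and \eqref{eq:packing}, the results are exactly $c_k(\cF,X)$, $d_k(\cF,X)$, and $p_k(\cF,X)$ at every $\alpha$-small $X\subseteq U$. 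Existence of a $k$-cover for $(X,\cF)$ is then the test $c_k(\cF,X)\ne 0$, and existence of a $k$-partition or $k$-packing is the test that the coefficient of $z^{|X|}$ in $d_k(\cF,X)$ or $p_k(\cF,X)$, respectively, is nonzero.

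The main obstacle to check is the bit complexity, because the coefficients $a_j(Y)$ and their $k$-th powers can be exponentially large and naive polynomial arithmetic could blow up in both degree and coefficient bit-length. Working in the truncated ring $R_\alpha$ caps the degree at $\alpha n$, which is justified because we only ever read off coefficients of $z^{|X|}$ with $|X|\le\alpha n$; and the surviving coefficients remain bounded by $2^{O(n)}$, hence fit in $\operatorname{poly}(n)$ bits, so the section's standing assumption that each ring operation costs $O^*(1)$ still applies. Consequently, the work for each fixed $k$ is $O^*(s(n,\alpha))$, and summing over the at most $\alpha n+1$ values of $k$ contributes only a polynomial factor that is absorbed by the $O^*$ notation, yielding the claimed bound.
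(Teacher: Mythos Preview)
Your proof is correct and follows essentially the same two-step recipe as the paper: compute $a(\cdot)$ and $h\zeta_\alpha(\cdot)$ via the fast $\alpha$-small zeta transform, then for each $k$ raise pointwise to the $k$-th power and apply the fast $\alpha$-small M\"obius transform. One small correction on the bit-complexity check you added: after raising to the $k$-th power the integers involved (e.g.\ $a(Y)^k$ with $a(Y)\le 2^{\alpha n}$ and $k\le\alpha n$) can be as large as $2^{O(n^2)}$ rather than $2^{O(n)}$, but since this is still $\operatorname{poly}(n)$ bits your conclusion that each ring operation is $O^*(1)$ remains valid.
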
 


\section{Divide-and-conquer algorithm}
\label{sec:divide}

Let $\Phi(I)=(U,\cF)$ be a polynomial time computable implicit set family for which we would like to determine whether there is a $k$-cover (assuming $\cF=\cFd$), $k$-partition, or $k$-packing.

We say that a simple branching algorithm \emph{enumerates} a family $\cF'$ of subsets of $U$ if, at each leaf of its search tree it finds at most one member of $\cF'$, and collectively the leaves find all members of $\cF'$ (duplicates are allowed).

Our algorithm uses a divide-and-conquer strategy where the universe is twice partitioned into two.
\begin{definition}
    For a family of subsets \cF of a universe $U$, and a subset $X\subseteq U$,
    the \emph{restriction} of \cF to $X$ is $r(\cF,X) = \{S\subseteq X : S\in \cF\}$.
\end{definition}

Observe that
$(U,\cF)$ has a $k$-partition (resp., a $k$-packing or a $k$-cover) for $k\ge 2$ iff
there is a set $L\subseteq U$ and a positive integer $k_l<k$ such that
$(L,r(\cF,L))$ has a $k_l$-partition (resp., a $k_l$-packing or a $k_l$-cover) and $(R,r(\cF,R))$ has a $k_r$-partition (resp., a $k_r$-packing or a $k_r$-cover), where $R=U\setminus L$ and $k_r=k-k_l$.
We say that a $k$-partition, $k$-packing, or $k$-cover $(S_1,\dots,S_k)$ does not \emph{straddle} $X$ if for each $i\in \{1,\dots,k\}$, either $S_i\subseteq X$ or $S_i\cap X = \emptyset$.

In this section we prove the following theorem.

\begin{theorem}\label{thm:main-cover}
    Suppose there is a simple (classical) branching algorithm $A$, which, given an instance $I$ and a subset $X\subseteq U$,
    enumerates a family $e(X,I)$ of subsets of $X$ such that
    \begin{itemize}
        \item if $(U,\cF)$ has a $k$-cover (resp., a $k$-partition or a $k$-packing) that does not straddle $X$, then $(U,\cF)$ has a $k$-cover (resp., a $k$-partition or a $k$-packing) $(S_1,\dots,S_k)$ with $S_1\in e(X,I)$, and
        \item the algorithm runs in $O^*(c^{|X|})$ time for some $c\le 2$.
    \end{itemize}
    Then, there is a bounded-error quantum algorithm, which determines whether $(U,\cF)$ has a $k$-cover (resp., a $k$-partition or a $k$-packing) in $O^*\left( \binom{n}{n/4} + (2+c)^{n/2} \right)$ time, where $n=|U|$.
\end{theorem}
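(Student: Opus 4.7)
The plan is to use the $\alpha=1/4$ instance of \autoref{thm:preprocessing} to populate, in $O^*(\binom{n}{n/4})$ time, a QRAM table that records whether $(X,\cF)$ admits a $k'$-partition (or $k'$-cover, $k'$-packing) for every $X\subseteq U$ with $|X|\le n/4$ and every $k'\le n/4$; subsequent lookups take $O^*(1)$. The main algorithm then only needs to guess a decomposition of $U$ into one set enumerated by $A$ together with a few pieces of size at most $n/4$ whose sub-solutions can be looked up.

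Correctness rests on the following greedy structural observation. Given any $k$-partition (or $k$-cover, $k$-packing) $(S_1,\dots,S_k)$ of a subset $X$, one obtains a non-straddling split by processing the sets in an arbitrary order and moving them one at a time into a growing $L\subseteq X$ until $|L|\ge|X|/2$; the last set $S^\star$ so added satisfies $S^\star\subseteq L$, $|L\setminus S^\star|<|X|/2$, $|X\setminus L|\le|X|/2$, and by the hypothesis on $A$ it lies in $e(L,I)$. Applying the observation first to $U$ and then a second time within each half of the resulting split yields a decomposition of $U$ into an enumerated set $S$ together with four leftover pieces each of size at most $n/4$, all of which lie in the preprocessed table. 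For covers, any straddling cover-set can be safely truncated to one side because $\cF=\cFd$; packings are handled analogously by absorbing an extra unused residue.

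Algorithmically, the two divide steps and the enumeration by $A$ are interleaved into a single quantum amplitude amplification over the joint space of tuples $(L,S,R_1,R_2)$, where $L\subseteq U$, $S\in e(L,I)$ is produced by running $A$ on $L$ (whose simple branching tree is explored in superposition, as stated in the Preliminaries), and $(R_1,R_2)$ is a bipartition of $U\setminus L$ subject to $|L\setminus S|,|R_1|,|R_2|\le n/4$; the accepting predicate queries the preprocessed table for compatible sub-solutions that, together with $S$, assemble into a $k$-solution. The size of the joint search space is bounded by
\begin{align*}
\sum_{\ell=0}^{n}\binom{n}{\ell}\,c^{\ell}\,2^{n-\ell} \;=\; (2+c)^n,
\end{align*}
since $L$ of size $\ell$ contributes $\binom{n}{\ell}$ choices, $A$ on $L$ produces at most $c^{\ell}$ candidate sets $S$, and the bipartition of $U\setminus L$ contributes at most $2^{n-\ell}$ choices. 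Grover's theorem then yields bounded-error quantum running time $\tilde{O}\bigl(\sqrt{(2+c)^n}\bigr)=O^*\bigl((2+c)^{n/2}\bigr)$, which combined with the preprocessing gives the claimed $O^*(\binom{n}{n/4}+(2+c)^{n/2})$.

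The main technical obstacle is collapsing the nested enumerations --- the outer choice of $(L,R_1,R_2)$ and the inner branching search for $S$, together with a second-level enumerated set when the iterated greedy demands it --- into a single amplitude amplification of the stated size, rather than a product of per-level Grover factors. This is handled by a direct polynomial-time indexing of the joint space so that each candidate tuple can be decoded and verified in $O^*(1)$ work during Grover's iteration while $A$'s simple branching tree is traversed in superposition. The other subtlety is verifying the iterated greedy observation in each variant (partition, cover, packing) and confirming that the second application, possibly with a second enumerated set, always yields pieces of size at most $n/4$; the preprocessing tabulates sub-solutions for every $k'\le n/4$ precisely so that this second-level bookkeeping goes through.
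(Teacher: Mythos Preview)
Your proposal has a genuine correctness gap. The decomposition you search over---tuples $(L,S,R_1,R_2)$ with $S\in e(L,I)$ and $|L\setminus S|,|R_1|,|R_2|\le n/4$---need not exist even when a $k$-solution does. Take $n=8$ and an instance whose only $4$-partition has parts of sizes $3,3,1,1$. For $|R_1|,|R_2|\le 2$ to hold with no part straddling, $U\setminus L$ may contain at most the two singletons; hence $L$ must contain both size-$3$ parts, forcing $|L|\ge 6$ and thus $|L\setminus S|\ge 3>n/4$ for any admissible $S$. Your greedy observation only guarantees $|L\setminus S^\star|<|X|/2$, not $|X|/4$; carrying it out ``a second time within each half'' actually yields \emph{three} enumerated sets (one per application) together with four residual pieces of size $\le n/4$, not a single $S$. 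Your text wavers between ``an enumerated set $S$ together with four leftover pieces'' and ``a second-level enumerated set when the iterated greedy demands it'', but the running-time sum $\sum_\ell \binom{n}{\ell}\, c^\ell\, 2^{n-\ell}$ budgets for exactly one call to $A$ (on $L$) and a bare bipartition of $U\setminus L$; the missing enumerations are not free.

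The paper's algorithm is structurally different: it divides first and enumerates last. It splits $U$ into $(L,R)$ with $|L|\ge n/2$, then splits each half again into $(LL,LR)$ and $(RL,RR)$ with $|LL|,|RL|\ge n/4$, and only then runs $A$ on \emph{each} of the four pieces $LL,LR,RL,RR$ to peel off one set and look up the size-$\le n/4$ remainder; correctness comes from \autoref{lma:ksplit} applied at both levels. The search-tree size driving the bound is $\sum_{\ell\ge n/2}\binom{n}{\ell}\sum_{\ell'\ge n/4}\binom{\ell}{\ell'}c^{\ell'}\le (2+c)^n$, so nested Grover gives $(2+c)^{n/2}$---but here the $c$-factor is paid on the \emph{quarter} piece, not on the half $L$. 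That your identity $\sum_\ell \binom{n}{\ell}\, c^\ell\, 2^{n-\ell}=(2+c)^n$ evaluates to the same number is an algebraic coincidence; the one-enumeration algorithm it describes is not sound.
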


From now on, we focus on Set Cover; the discussion of Set Partition and Set Packing is analogous.
The first step is to use the algorithm from \autoref{thm:preprocessing} with $\alpha=\frac{1}{4}$ and store the result in QRAM.

Ideally, we would want to divide the universe $U$ into equal sized sets $L$ and $R$ and compute a $k$-cover where each set of the cover is responsible for covering elements of either $L$ or $R$.
However, we cannot guarantee that such a $k$-cover exists.
Instead, we can focus on partitions of $U$ into $L$ and $R$ with $|L|\ge n/2$ where the removal of one member of the $k$-cover decreases the size of $L$ to at most $n/2$.

\begin{lemma} \label{lma:ksplit}
    For any $x\in\{0,\dots,n\}$, $(U,\cF)$ has a $k$-cover, $k\ge 2$, iff
    there is a partition of $U$ into $(L,R)$ with $|L|\ge x$ and integers $k_L,k_R\ge 1$ with $k=k_L+k_R$ such that
    $(L,r(\cF,L))$ has a $k_L$-cover $(S_1,\dots,S_{k_L})$ and $(R,r(\cF,R))$ has a $k_r$-cover
    such that $|S_i| \ge |L|-x$ for every $i\in\{1,\dots,k_L\}$.
\end{lemma}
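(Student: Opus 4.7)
}
The backward direction is straightforward and will not be the main work: given a partition $(L,R)$, a $k_L$-cover $(S_1,\dots,S_{k_L})$ of $L$, and a $k_R$-cover $(S'_1,\dots,S'_{k_R})$ of $R$, I would concatenate the two tuples to obtain the $k$-tuple $(S_1,\dots,S_{k_L},S'_1,\dots,S'_{k_R})$. Since $r(\cF,L),r(\cF,R)\subseteq\cF$ and $L\cup R = U$, this is a $k$-cover of $U$; the size constraint $|S_i|\ge |L|-x$ is irrelevant for this direction.

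For the forward direction, I would begin with an arbitrary $k$-cover $(T_1,\dots,T_k)$ of $U$ and construct the partition by selecting a subset $J\subseteq\{1,\dots,k\}$ to form the ``left'' side: set $L=\bigcup_{i\in J}T_i$, $R=U\setminus L$, $k_L=|J|$, $k_R=k-|J|$, and declare the $L$-cover to be $(T_i)_{i\in J}$ and the $R$-cover to be $(T_i\cap R)_{i\notin J}$. The $L$-cover is valid because each $T_i\subseteq L$ and $T_i\in\cF$, hence $T_i\in r(\cF,L)$. The $R$-cover is valid because $\cF=\cFd$ makes each $T_i\cap R$ belong to $r(\cF,R)$, and any $v\in R$ must be covered by some $T_j$ of the original cover with $j\notin J$ (otherwise $v$ would be in $L$), so $v\in T_j\cap R$. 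With this construction in place, the only remaining work is to choose $J$ so that $|L|\ge x$, $|J|\le k-1$ (so that $k_R\ge 1$), and $|T_i|\ge |L|-x$ for every $i\in J$.

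I would split on whether some single set is already large. If some $|T_i|\ge x$, take $J=\{i\}$: then $|L|=|T_i|\ge x$, $|J|=1\le k-1$ (using $k\ge 2$), and the size constraint $|T_i|\ge |T_i|-x$ is trivial. Otherwise every $|T_i|<x$, in which case I would pick $J$ to be an inclusion-minimal index subset with $|\bigcup_{i\in J}T_i|\ge x$. Inclusion-minimality is what will deliver the size constraint: for each $i\in J$, the set $\bigcup_{j\in J\setminus\{i\}}T_j$ has size strictly less than $x$, and since $L\subseteq T_i\cup\bigcup_{j\in J\setminus\{i\}}T_j$, we get $|L|<|T_i|+x$, i.e.\ $|T_i|>|L|-x$, as required.

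The main obstacle is verifying $|J|\le k-1$ in the second case, which is what makes the iff useful for divide-and-conquer: I need to rule out the degenerate situation $J=\{1,\dots,k\}$ (which would leave $R=\emptyset$). The argument is that when every $|T_i|<x$ and $x\le n/2$ (the regime relevant for the paper's application with $x=n/2$), removing any single index $i$ from the full cover still leaves $\bigl|\bigcup_{j\ne i}T_j\bigr|\ge n-|T_i|>n-x\ge x$, witnessing a proper subset of $\{1,\dots,k\}$ with union of size at least $x$. Hence the inclusion-minimal $J$ has $|J|\le k-1$, giving $k_R\ge 1$, and the forward direction is complete.
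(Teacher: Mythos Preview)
The paper does not include a proof of this lemma; it is stated and then used. Your plan is the natural one and is correct in the range you identify. The backward direction is immediate; for the forward direction, taking $L$ to be the union of an inclusion-minimal subfamily $J$ of the given cover with $\bigl|\bigcup_{i\in J}T_i\bigr|\ge x$ simultaneously yields the cover of $L$ (each $T_i\in r(\cF,L)$), the cover of $R$ (each $T_j\cap R\in r(\cF,R)$ by downward closure), and---via minimality---the size bound $|T_i|\ge |L|-x$.

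You are also right to isolate the condition $x\le n/2$ for the step $|J|\le k-1$. This restriction is necessary, not merely an artifact of your argument: the lemma as written, for arbitrary $x\in\{0,\dots,n\}$, is false. Take $U=\{1,2,3,4\}$, let $\cF$ be the downward closure of $\{\{1,2\},\{3,4\}\}$, and set $k=2$, $x=3$. Then $(\{1,2\},\{3,4\})$ is a $2$-cover of $U$, but any admissible partition has $|L|\ge 3$ and forces $k_L=k_R=1$, hence requires $L\in\cF$; this is impossible since every member of $\cF$ has at most two elements. Your proof therefore establishes exactly the valid range $x\le n/2$, which covers the paper's principal invocation (the top-level split with $x=n/2$).
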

In particular, this means that removing any member $S_i$ of the $k_L$-cover gives a $(k_L-1)$-cover of $L\setminus S_i$ and $|L\setminus S_i|\le x$.

We use Grover's search to divide the elements into two sets $(L,R)$ where $|L|\ge n/2$ and $k$ into $k_L+k_R$.
Then, we solve each of these two instances independently, again using Grover's search on $(L,r(\cF,L))$ (resp., on $(R,r(\cF,R))$), dividing it into $(LL,LR)$ where $|LL|\ge n/4$ and $k_{LL}+k_{LR}=k_L$ (similar for $R$).
For $(LL,r(\cF,LL))$ (and, similarly, on the corresponding instances for $LR$, $RL$, $RR$), we use Grover's search on the branching algorithm $A$ to enumerate candidate sets $S_1\in e(LL,r(\cF,LL))$; if $|LL|-|S_1|>n/4$, then this branch is unsuccessful; otherwise, look up whether $LL\setminus S_1$ has a $(k_{LL}-1)$-cover in QRAM.

The running time of this algorithm is
\begin{align*}
    O^*\left(\binom{n}{n/4}\right)
\end{align*}
for the running the algorithm from \autoref{thm:preprocessing}; the steps using Grover's search take
\begin{align*}
    O^*\left(\sqrt{\sum_{l=\lceil n/2\rceil}^n \binom{n}{l} \sum_{l'=\lceil n/4 \rceil}^l \binom{l}{l'} c^{l'} }\right) = O^*\left((2+c)^{n/2}\right)
\end{align*}
time to achieve constant success probability. This proves \autoref{thm:main-cover}.

\paragraph{Discussion.}
When $c \ge \frac{16}{3^{3/2}}-2 \approx 1.079201$, then the running time is $O^*\left( (2+c)^{n/2} \right)$.
When $c \le \frac{16}{3^{3/2}}-2$, then the running time is dominated by the term $\binom{n}{n/4}\approx 1.7548^n$. 
For $c\le 1.147899$, the next section gives faster algorithms.


\section{Divide-and-conquer algorithms for small $c$}
\label{sec:smallc}
In this section, we again assume that there is an enumeration algorithm $A$, as in \autoref{thm:main-cover}, with running time $O^*\left(c^{|X|}\right)$. We present two divide-and-conquer algorithms which are faster for small values of $c$.

Throughout this section, we focus on Set Cover; the discussion of Set Partition and Set Packing is analogous.

\begin{theorem} \label{thm:smallc-cover}
    There is a bounded-error quantum algorithm, which determines whether $(U, \cF)$ has a $k$-cover (resp., a $k$-partition or a $k$-packing) in $O^*\left( \binom{n}{n/4} + (1+c)^{\frac{3}{4} \cdot n} \right)$ time, where $n=|U|$.
\end{theorem}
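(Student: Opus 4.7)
The plan is to extend the divide-and-conquer approach of the previous section by blending the divide-then-enumerate strategy of \autoref{thm:main-cover} with an additional enumeration step on the larger half of the partition. The preprocessing is again the application of \autoref{thm:preprocessing} with $\alpha = 1/4$, which accounts for the $\binom{n}{n/4}$ term. For the quantum divide-and-conquer part, I would build a nested Grover search with an outer loop over partitions $(L, R)$ of $U$ satisfying $|L| \ge n/2$, together with choices $(k_L, k_R)$ of $k_L + k_R = k$, and a pair of AND-composed inner subroutines for $L$ and $R$.

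For each partition, on $L$ I would apply Grover directly to the classical branching algorithm $A$ to search among $S_1 \in e(L, I)$, accepting only those branches with $|L \setminus S_1| \le n/4$, which are then verified via a QRAM lookup for a $(k_L - 1)$-cover. On $R$, whose size is at most $n/2$, I would run the \autoref{thm:main-cover}-style subroutine: Grover over partitions $(R_L, R_R)$ of $R$ with $|R_L| \ge |R|/2$, so that $|R_R| \le n/4$ is immediately QRAM-accessible, followed by Grover on $A$ to find $S_1' \in e(R_L, I)$ with $|R_L \setminus S_1'| \le n/4$, again QRAM-verified. Collapsing the nested Grover into a single combined search tree as in \autoref{thm:main-cover}, the total number of leaves is bounded by
\begin{align*}
    &\sum_{l \ge n/2} \binom{n}{l} \, c^l \, \sum_{l' \ge (n - l)/2} \binom{n-l}{l'} \, c^{l'} \\
    &\qquad\le \sum_{l \ge n/2} \binom{n}{l} \, c^l \, (1+c)^{n-l}.
\end{align*}
After the substitution $m = n - l$, the right-hand side becomes $c^n \sum_{m \le n/2} \binom{n}{m} ((1+c)/c)^m$, whose largest term (for any $c \ge 0$) occurs at the boundary $m = n/2$; a standard Stirling estimate then bounds the whole expression by $O^*(2^n c^{n/2} (1+c)^{n/2})$. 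Taking the square root yields $O^*(2^{n/2} c^{n/4} (1+c)^{n/4})$, and the elementary inequality $(1 - \sqrt{c})^2 \ge 0$ gives $2\sqrt{c} \le 1 + c$, so $2^{n/2} c^{n/4} \le (1+c)^{n/2}$ and the whole bound collapses to $(1+c)^{3n/4}$.

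For correctness, the argument I would give is structural: whenever $(U, \cF)$ admits a $k$-cover, some outer choice of $(L, R)$ together with inner choices of $S_1, (R_L, R_R), S_1'$ satisfies every size condition simultaneously. Given a $k$-cover $(S_1, \dots, S_k)$, the plan is to build a cover-compatible $(L, R)$ by greedily placing the largest cover sets on the $L$-side while keeping $|L \setminus S| \le n/4$ for the largest set $S$ assigned to $L$, and then repeating this construction inside $R$ to obtain $(R_L, R_R)$ and the designated first set on $R_L$. The hardest part, I expect, will be showing that such a decomposition always exists, particularly when the cover consists of many small sets, so that no single $S_i$ can absorb enough of $L$ to leave a QRAM-sized remainder. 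Handling this regime may require exploiting the slack between $|L| \ge n/2$ and the $n/4$ excess budget by choosing $L$ as a carefully tailored union of cover sets, or a separate case analysis in the ``all small sets'' regime where a more balanced partition is available.
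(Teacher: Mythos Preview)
Your running-time calculation is fine, but the algorithm you describe is not correct, and the gap is precisely the one you flag without resolving. You require, on the $L$-side, a cover-compatible partition $(L,R)$ with $|L|\ge n/2$ together with a set $S_1$ in the $k_L$-cover of $L$ such that $|L\setminus S_1|\le n/4$; equivalently, $|S_1|\ge |L|-n/4\ge n/4$. Now take any instance whose optimal cover consists only of sets of size strictly less than $n/4$ --- for example a $5$-partition into equal parts. Any cover-compatible $L$ with $|L|\ge n/2$ is a union of at least three of these parts, so for every $S_i$ on the $L$-side we have $|L\setminus S_i|>|L|-n/4\ge n/4$, and your inner search rejects every branch. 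The ``carefully tailored union'' and ``balanced partition in the all-small-sets regime'' ideas cannot rescue this: the inequality $|L\setminus S_1|>|L|-n/4\ge n/4$ is forced whenever $|L|\ge n/2$ and $|S_1|<n/4$, and relaxing to $|L|<n/2$ just transfers the same obstruction to the $R$-side.

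The paper's actual algorithm avoids this by asking the enumeration on $L$ only to reduce the size to $\le n/2$ (which \autoref{lma:ksplit} with $x=n/2$ does guarantee), and then inserting one more divide step on the remainder $L\setminus S_1$ of size $\le n/2$: split it into $(LL,LR)$ with $|LL|\ge n/4$, and perform the final enumerate-then-QRAM-lookup on each of $LL$, $LR$. Since $|L\setminus S_1|\le n/2$ and $|LL|\ge n/4$, both pieces end up with remainder $\le n/4$ after one further enumeration, by a second application of \autoref{lma:ksplit}. Your treatment of the $R$-side is essentially what the paper does; only the $L$-side needs the extra layer, and that extra layer is exactly what converts your two-enumeration scheme into the paper's three-enumeration scheme whose leaf count is $\sum_{l\ge n/2}\binom{n}{l}c^{l}\sum_{l'\ge n/4}^{n/2}\binom{n/2}{l'}c^{l'}=O^*\bigl((1+c)^{3n/2}\bigr)$ before the square root.
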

The first step is to use the algorithm from \autoref{thm:preprocessing} with $\alpha=\frac{1}{4}$ and store the result in QRAM. Our algorithm is similar to the algorithm in \autoref{thm:main-cover}, but we use the branching algorithm $A$ to remove a subset from $L$ before dividing it into $(LL, LR)$. Again, \autoref{lma:ksplit} is central to our algorithm.

We use Grover's Search to divide the elements into two sets $(L,R)$ where $|L| \ge n/2$ and $k$ into $k_L+k_R$. Then, we solve each of these two instances independently, again using Grover's search on $(R, r(\cF, R))$, dividing it into $(RL, RR)$ where $|RL| \ge n/4$ and $k_{RL} + k_{RR} = k_{R}$. For the instance $(L, r(\cF, L))$, we use Grover's search on the branching algorithm $A$ to enumerate candidate sets $S_1 \in e(L, r(\cF, L))$; if $|L| - |S_1| > n/2$ then this branch is unsuccessful; otherwise, we use Grover's search on the subinstance $(L \setminus S_1, r(\cF, L \setminus S_1))$, dividing it into $(LL, LR)$ where $|LL| \ge n/4$ and $k_{LL} + k_{LR} = k_{L} - 1$.

We process $(LL,r(\cF,LL))$ (and, similarly, the corresponding instances for $LR$, $RL$, $RR$) as we did in \autoref{thm:main-cover} -- we use Grover's search on the branching algorithm $A$ to enumerate candidate sets $S_1\in e(LL,r(\cF,LL))$; if $|LL|-|S_1|>n/4$, then this branch is unsuccessful; otherwise, look up whether $LL\setminus S_1$ has a $(k_{LL}-1)$-cover in QRAM.

The running time of this algorithm is
\begin{align*}
    O^*\left(\binom{n}{n/4}\right)
\end{align*}
for the running the algorithm from \autoref{thm:preprocessing}; the steps using Grover's search take
\begin{align*}
    O^*\left(\sqrt{\sum_{l=\lceil n/2\rceil}^n \binom{n}{l} c^{l} \sum_{l'=\lceil n/4 \rceil}^{\lfloor n / 2 \rfloor} \binom{n / 2}{l'} c^{l'} }\right) = O^*\left((1+c)^{\frac{3}{4} \cdot n}\right)
\end{align*}
time to achieve constant success probability. This proves \autoref{thm:smallc-cover}.

\paragraph{Discussion.}
When $c \ge \frac{2^{8/3}}{3} - 1 \approx 1.11653$, then the running time is $O^*\left( (1+c)^{\frac{3}{4} \cdot n} \right)$.
When $c \le \frac{2^{8/3}}{3} - 1$, then the running time is dominated by the term $\binom{n}{n/4}\approx 1.7548^n$. 
When $c \le 1.0872$, then the following algorithm is faster.

\begin{theorem} \label{thm:smallc-cover2}
    If $c \le 1.0872$, there is a bounded-error quantum algorithm, which determines whether $(U, \cF)$ has a $k$-cover (resp., a $k$-partition or a $k$-packing) in
    \begin{align*}
        O^*\left(\left(\min_{0.1303\le \alpha \le 0.25} \left((1 + c)^{3 / 4} c^{\alpha / 2}
        \left(1 - 4 \cdot \alpha\right)^{\frac{4 \cdot \alpha-1}{8}}
        \left(4 \cdot \alpha\right)^{-\frac{\alpha}{2}}
        ,\;\; \alpha^{-\alpha} \cdot (1-\alpha)^{\alpha-1}\right)\right)^n\right) \text{ time.}
    \end{align*}
\end{theorem}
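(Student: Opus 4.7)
The plan is to extend the divide-then-enumerate scheme of \autoref{thm:smallc-cover} by a third level of recursion, while simultaneously parameterizing the preprocessing depth as $\alpha n$ in place of $n/4$. The second expression inside the min will be the cost of running \autoref{thm:preprocessing} at parameter $\alpha$: by Stirling, $s(n,\alpha) = O^*\left((\alpha^{-\alpha}(1-\alpha)^{\alpha-1})^n\right)$. The first expression will emerge from a three-level Grover recursion that iteratively invokes \autoref{lma:ksplit} with thresholds $n/2$, $n/4$, and $\alpha n$.

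After preprocessing, the algorithm runs three nested Grover searches. At level $1$, search over partitions $(L,R)$ of $U$ with $|L|\ge n/2$ and over splits $k=k_L+k_R$, and use Grover on the classical enumeration algorithm $A$ to find $S_1\in e(L,r(\cF,L))$ with $|L\setminus S_1|\le n/2$. This leaves $(L\setminus S_1,k_L-1)$ and $(R,k_R)$, both of universe $\le n/2$. At level $2$, applied on each size-$\le n/2$ instance, search over partitions whose larger side $LL$ satisfies $|LL|\ge n/4$ by \autoref{lma:ksplit}, then use Grover on $A$ to enumerate $S_2\in e(LL,\ldots)$ with $|LL\setminus S_2|\le n/4$. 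At level $3$, applied on each size-$\le n/4$ instance, search over partitions with $|LLL|\ge \alpha n$, then enumerate $S_3$ with $|LLL\setminus S_3|\le \alpha n$. Both resulting subsets are $\alpha$-small: the enumerated side by construction, and the other side because $|LLR|\le n/4-\alpha n\le \alpha n$ whenever $\alpha\ge 1/8$, which is implied by $\alpha\ge 0.1303$. Their covering/partitioning/packing status is read from QRAM in polylogarithmic time.

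Collecting the square roots contributed by each Grover application and each enumeration, the quantum runtime is bounded by
\begin{align*}
    O^*\left(\sqrt{\sum_{l_1\ge n/2}\binom{n}{l_1}c^{l_1}\cdot\sum_{l_2\ge n/4}\binom{n/2}{l_2}c^{l_2}\cdot\sum_{l_3\ge \alpha n}\binom{n/4}{l_3}c^{l_3}}\right).
\end{align*}
The first two sums equal $(1+c)^n$ and $(1+c)^{n/2}$ by the binomial theorem. For the third sum, a single entropic maximization shows that $\binom{n/4}{l_3}c^{l_3}$ is maximized over $l_3\in[\alpha n,n/4]$ at the boundary $l_3=\alpha n$ exactly when $\alpha\ge c/(4(1+c))$; the assumption $c\le 1.0872$ pushes this threshold below $0.1303$, so throughout $[0.1303,0.25]$ the third sum equals $\binom{n/4}{\alpha n}c^{\alpha n}$ up to polynomial factors. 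Substituting Stirling's approximation for the binomial and simplifying produces the first expression in the min.

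The main obstacle is pinning down the endpoints of the optimization interval. The upper endpoint $\alpha\le 1/4$ is needed so that \autoref{lma:ksplit} is applicable at level $3$ (one needs $\alpha n \le n/4$) and so that the non-enumerated sides ($LLR$ and its analogues) remain $\alpha$-small; the lower endpoint $\alpha\ge 0.1303$ is precisely the boundary-domination cutoff for the level-$3$ sum under the assumption $c\le 1.0872$. A secondary bookkeeping issue is that the sibling branches $R$, $LR$, $RL$, $RR$ either lack a level-$1$ enumeration above them or lack a level-$2$ enumeration; each such branch has a universe no larger than its enumerated sibling, so the same triple-sum bounds the $\sqrt{N}$ Grover cost for those branches as well, and the $O^*$ notation absorbs the constantly many branch combinations.
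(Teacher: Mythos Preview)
Your proposal is correct and follows essentially the same approach as the paper. Both arguments run the preprocessing of \autoref{thm:preprocessing} at level $\alpha$, then perform three nested levels of divide-then-enumerate via \autoref{lma:ksplit} with thresholds $n/2$, $n/4$, $\alpha n$, and both establish that the level-$3$ sum $\sum_{l_3\ge \alpha n}\binom{n/4}{l_3}c^{l_3}$ is dominated by its boundary term precisely because $\alpha\ge 0.1303 > \frac{c}{4(1+c)}$ for $c\le 1.0872$. The paper packages the third level as a separate lemma (\autoref{lma:3rdsplit}) and does not spell out the $\alpha\ge 1/8$ requirement for $|XR|\le \alpha n$ that you make explicit, but the content is the same. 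One minor wording issue: the first two sums are \emph{bounded by} $(1+c)^n$ and $(1+c)^{n/2}$ via the binomial theorem, not equal to them; this does not affect the argument.
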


The first step is to use the algorithm from \autoref{thm:preprocessing} with some $0.1303 \le \alpha < 0.25$ and store the result in QRAM.
Our algorithm is identical to the one presented in \autoref{thm:smallc-cover} except that we cannot directly look up whether a subset $X \subseteq U$ has a $k$-cover in QRAM for $|X| \le n / 4$. Instead, we use the following approach.

\begin{lemma} \label{lma:3rdsplit}
    If $c \le 1.0872$ and $0.1303 \le \alpha < 0.25$, after running the algorithm from \autoref{thm:preprocessing}, there exists a bounded-error quantum algorithm that checks for a $k$-cover of a subset $X \subseteq U$ where $|X| \le n/4$ in $O^*\left( \sqrt{\binom{n/4}{\alpha \cdot n} \cdot c^{\alpha \cdot n}} \right)$ time.
\end{lemma}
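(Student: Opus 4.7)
The plan is to perform a two-level nested Grover's search on top of algorithm $A$ and the QRAM lookups from \autoref{thm:preprocessing}. The geometric fact driving the construction is $|X| \le n/4 \le 2\alpha n$, which follows from $\alpha \ge 0.1303 > 1/8$; hence any partition $X = Y \sqcup (X \setminus Y)$ with $|X \setminus Y| \le \alpha n$ automatically has $|Y| \le \alpha n$ as well, making both sides $\alpha$-small and directly accessible from QRAM.

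Concretely, I would use an outer Grover's search to enumerate subsets $Y \subseteq X$ with $|X \setminus Y| \le \alpha n$ together with an integer $k_1 \in \{0, \dots, k-1\}$; the number of such candidates is $O^*(\binom{|X|}{\alpha n}) \le O^*(\binom{n/4}{\alpha n})$. Nested inside, an inner Grover's search runs algorithm $A$ on the subinstance $(X, r(\cF, X))$ with subset $X \setminus Y$ of size at most $\alpha n$, enumerating a candidate first cover member $S_1 \in e(X \setminus Y, (X, r(\cF, X)))$ in time $O^*(c^{\alpha n / 2})$. At each leaf, two QRAM lookups verify that $Y$ has a $k_1$-cover and $(X \setminus Y) \setminus S_1$ has a $(k - k_1 - 1)$-cover. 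The combined Grover's search has running time $O^*(\sqrt{\binom{n/4}{\alpha n} \cdot c^{\alpha n}})$, as required.

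For correctness, suppose $X$ has a $k$-cover $(S_1', \dots, S_k')$. I would partition the index set into $I_1 \sqcup I_2$ and set $Y = \bigcup_{i \in I_1} S_i'$; replacing each $S_j'$ with $j \in I_2$ by $S_j' \setminus Y$, which still lies in $\cFd$ by downward closure, yields a $k$-cover of $X$ that does not straddle $X \setminus Y$. Algorithm $A$'s enumeration guarantee then provides an $S_1 \in e(X \setminus Y, (X, r(\cF, X)))$ that is a first member of some such non-straddling cover in $X \setminus Y$, and the residual cover members certify that $Y$ has a $|I_1|$-cover while $(X \setminus Y) \setminus S_1$ has a $(|I_2| - 1)$-cover. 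When a balanced choice of $I_1$ exists with $|Y|\ge |X|-\alpha n$, this pair $(Y, k_1 = |I_1|)$ is in the outer search space and both lookups succeed.

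The main obstacle will be the pathological case where no partition of the cover admits $|X\setminus Y|\le \alpha n$ (for instance, when every cover member has size $>\alpha n$ so the union of any non-empty $I_1$ already exceeds the allowed $|Y|$ range). Here, however, a single cover member $S_i'$ already satisfies $|X\setminus S_i'|\le \alpha n$ because $|X|\le 2\alpha n$, so a fallback applying Grover's search directly to algorithm $A$ on $(X, r(\cF, X))$ with subset $X$ locates such a member in $O^*(c^{|X|/2})\le O^*(c^{n/8})$ time, after which a single QRAM lookup on $X\setminus S_1$ finishes the check. The fallback cost is dominated by $O^*(\sqrt{\binom{n/4}{\alpha n}\cdot c^{\alpha n}})$ throughout the regime $c\le 1.0872$, so the overall running time matches the claimed bound.
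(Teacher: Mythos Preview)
Your argument has a genuine gap in the step ``any partition $X = Y \sqcup (X\setminus Y)$ with $|X\setminus Y|\le\alpha n$ automatically has $|Y|\le\alpha n$.'' From $|X|\le 2\alpha n$ you only get $|Y|\le 2\alpha n$; when $|X\setminus Y|$ is small, $|Y|$ can be as large as $|X|\le n/4$, which exceeds $\alpha n$ whenever $\alpha<1/4$, so the QRAM lookup on $Y$ is not available. More damagingly, the number of such $Y$ is $\sum_{j\le\alpha n}\binom{|X|}{j}$ (with $j=|X\setminus Y|$), and since $\alpha n>|X|/2$ this is $\Theta(2^{|X|})$, not $O^*\bigl(\binom{n/4}{\alpha n}\bigr)$. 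Even after adding the missing constraint $|Y|\le\alpha n$, the pre-Grover cost becomes $\sum_{|X|-\alpha n\le j\le\alpha n}\binom{|X|}{j}c^{j}$; this range contains the mode $j^*=\tfrac{c}{c+1}|X|$, so the sum is $\Theta^*\bigl((1+c)^{|X|}\bigr)$, which for $|X|=n/4$ is exponentially larger than $\binom{n/4}{\alpha n}c^{\alpha n}$ as soon as $\alpha$ moves above $0.1303$.

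The paper's proof makes one decisive change: it runs the enumeration algorithm $A$ on the \emph{large} piece $XL$ with $|XL|\ge\alpha n$, not on the small one. Then $|XR|\le|X|-\alpha n<\alpha n$ is automatically $\alpha$-small, and \autoref{lma:ksplit} (applied with $x=\alpha n$) supplies an $S_1$ with $|XL\setminus S_1|\le\alpha n$, so both lookups succeed with no fallback needed. Crucially, the pre-Grover cost is now the \emph{tail} $\sum_{i\ge\alpha n}\binom{|X|}{i}c^{i}$; the hypotheses $c\le 1.0872$ and $\alpha\ge 0.1303$ are calibrated exactly so that $\alpha n>\tfrac{c}{c+1}\cdot\tfrac{n}{4}$, i.e.\ the sum starts past the mode and is dominated by its first term $\binom{n/4}{\alpha n}c^{\alpha n}$. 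Running $A$ on the small side discards this monotonicity and cannot recover the stated bound.
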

\begin{proof}
    We use Grover's search to divide the elements of $X$ into two sets $(XL, XR)$ where $|XL| \ge \alpha \cdot n$ and $k$ into $k_{XL} + k_{XR}$. We use Grover's search on the branching algorithm $A$ to enumerate candidate sets $S_1 \in e(XL, r(\cF, XL))$; if $|XL| - |S_1| > \alpha \cdot n$, then this branch is unsuccessful; otherwise, look up whether $XL \setminus S_1$ has a $(k_{XL} - 1)$-cover and $XR$ has a $k_{XR}$-cover in QRAM.
    The correctness of this approach follows from \autoref{lma:ksplit}.
    
    As the function $f(x) = \binom{n/4}{x} c^x$ is strictly decreasing for $x > \frac{c}{c + 1} \cdot \frac{n}{4}$ and 
    \begin{align*}
        \alpha \cdot n \ge 0.1303 \cdot n > \frac{1.0872 \mathbin{/} 4}{1.0872 + 1} \cdot n \ge \frac{c}{c + 1} \cdot \frac{n}{4},
    \end{align*}
    the running time of this algorithm is
    \begin{align*}
        O^*\left( \sqrt{\sum_{i = \lceil \alpha \cdot n \rceil}^{\lfloor n/4 \rfloor}\binom{n/4}{i} c^i} \right) = O^*\left( \sqrt{\binom{n/4}{\alpha \cdot n} c^{\alpha \cdot n}} \right)\\
    \end{align*}
\end{proof}

Our algorithm is the same as the one in \autoref{thm:preprocessing}, except when we check for a $(k_{LL}-1)$-cover for $LL \setminus S_1$ (resp., on $LR$, $RL$, $RR$) we use \autoref{lma:3rdsplit} instead of a direct lookup in QRAM.

Running the algorithm from \autoref{thm:preprocessing} takes
\begin{align*}
    O^*\left(\binom{n}{\alpha \cdot n}\right) &= O^*\left(\alpha^{-\alpha n} \cdot (1-\alpha)^{(\alpha-1)n}\right)
\end{align*}
time. The steps using Grover's search take
\begin{align*}
    &\; O^*\left(\sqrt{\sum_{l=\lceil n/2\rceil}^n \binom{n}{l} c^{l} \sum_{l'=\lceil n/4 \rceil}^{\lfloor n / 2 \rfloor} \binom{n / 2}{l'} c^{l'} } \sqrt{\binom{0.25 \cdot n}{\alpha \cdot n} c^{\alpha \cdot n}} \right)\\
    =&\; O^*\left(\left((1 + c)^{3 / 4} c^{\alpha / 2}
    \left(\frac{1}{1 - 4 \cdot \alpha}\right)^{\frac{1 - 4 \cdot \alpha}{8}}
    \left(\frac{1}{4 \cdot \alpha}\right)^{\frac{\alpha}{2}}
    \right)^n\right) \text{ time.}
\end{align*}

The running time of the first part increases with $\alpha$ while the running time of the second part decreases with $\alpha$. We can therefore optimise the running time by balancing the value of $\alpha$ for a given $c$. To determine when this approach is faster than $O^*\left(\binom{n}{n/4}\right)$ we compute the value of $c$ when $\alpha$ is balanced at $0.25$:
\begin{alignat*}{2}
    && (1 + c)^{3/4} c^{1 / 8} &= \frac{4}{3^{3/4}}\\
    &\;\:\Longrightarrow& c &\approx 1.08724\enspace.
\end{alignat*}

Therefore, this algorithm outperforms the algorithms of \autoref{thm:main-cover} and \autoref{thm:smallc-cover} when $c \le 1.08723$. The precise running time for certain values of $c$ and the corresponding values of $\alpha$ are shown in the table below. We also depict the best running times of all three algorithms for various values of $c$.

\begin{center}
    \begin{tabular}[t]{l l l} 
        \toprule
        $c$ & $\alpha$ & running time\\
        \midrule
        1.0 & 0.236159 & $O^*(1.7274^n)$ \\
        1.01 & 0.238036 & $O^*(1.7312^n)$ \\
        1.02 & 0.239858 & $O^*(1.7349^n)$ \\
        1.03 & 0.241622 & $O^*(1.7384^n)$ \\
        1.04 & 0.243320 & $O^*(1.7418^n)$ \\
        1.05 & 0.244946 & $O^*(1.7450^n)$ \\
        1.06 & 0.246488 & $O^*(1.7480^n)$ \\
        1.07 & 0.247928 & $O^*(1.7508^n)$ \\
        1.08 & 0.249227 & $O^*(1.7533^n)$ \\
        \bottomrule
    \end{tabular}
    \hfill
    \begin{tikzpicture}[baseline=(current bounding box.north)]
        \begin{axis}[
            title=Running time for various values of $c$,
            xlabel={$c$}, ylabel={Base $b$ of running time $O^*(b^n)$},
            ymin=1.7, ymax=2,
            xmin=1, xmax=2,
            legend style = {
                at = {(0.95,0.05)},
                anchor = south east,
                draw = none,
            },
            ]
            \addplot[violet, domain=1.1479:2.0, samples=100, thick,] {(2+x)^(0.5)}; \addlegendentry{\autoref{thm:main-cover}}
            \addplot[ red, domain=1.08724:1.1479, samples=60, thick,] {max(1.7548,(1+x)^(0.75))}; \addlegendentry{\autoref{thm:smallc-cover}}
            \addplot[blue, thick,] table {smallc.dat}; \addlegendentry{\autoref{thm:smallc-cover2}}
            
        \end{axis}
    \end{tikzpicture}
\end{center}


Combining \autoref{thm:main-cover}, \autoref{thm:smallc-cover}, and \autoref{thm:smallc-cover2}, we obtain the following corollary.

\begin{corollary}\label{cor:simpletime}
    There is a bounded-error quantum algorithm, which determines whether $(U, \cF)$ has a $k$-cover (resp., a $k$-partition or a $k$-packing) in $O^*\left( (2+c)^{n/2} \right)$ time, where $n=|U|$.
\end{corollary}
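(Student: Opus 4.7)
The plan is to combine \autoref{thm:main-cover} and \autoref{thm:smallc-cover2} via case analysis on $c$; \autoref{thm:smallc-cover} is not essential for this corollary.

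Let $c_0 := \frac{16}{3^{3/2}} - 2 \approx 1.079$, so that by Stirling's approximation $\binom{n}{n/4} = O^*\bigl((4/3^{3/4})^n\bigr) = O^*\bigl((2+c_0)^{n/2}\bigr)$.

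First, when $c \ge c_0$, I would apply \autoref{thm:main-cover} directly: since $(2+c)^{n/2} \ge (2+c_0)^{n/2}$, the $\binom{n}{n/4}$ term is absorbed up to polynomial factors, yielding the desired bound $O^*((2+c)^{n/2})$.

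Next, when $c < c_0$, I would apply \autoref{thm:smallc-cover2}, which is valid since $c < c_0 < 1.0872$. I would select $\alpha = \alpha(c) \in [0.1303, 0.25]$ minimising the larger of the two expressions inside the min in \autoref{thm:smallc-cover2}, and verify that this optimum is at most $(2+c)^{1/2}$. At the boundary $c = c_0$ the optimum choice is $\alpha = 1/4$: the preprocessing-side expression $(1/4)^{-1/4}(3/4)^{-3/4}$ equals $4/3^{3/4} = (2+c_0)^{1/2}$, while the Grover-side expression reduces (interpreting $0^0$ as its limit $1$) to $(1+c_0)^{3/4} c_0^{1/8}$, which is strictly less. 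Hence the max is $(2+c_0)^{1/2}$, matching the target with equality at the boundary. As $c$ decreases toward $1$, the optimal $\alpha(c)$ decreases toward $\approx 0.236$, and the tabulated values in the discussion of \autoref{thm:smallc-cover2} (e.g., $1.7274^n < \sqrt{3}^n$ at $c = 1$) confirm strict inequality.

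The main obstacle is proving this strict inequality analytically for all $c \in [1, c_0)$. My plan is to substitute the implicit function $\alpha(c)$ into both arguments, simplify, and compare with $(2+c)^{1/2}$; a continuity argument together with a slope-comparison anchored at the boundary equality at $c = c_0$ would then extend the pointwise inequality throughout the interval.
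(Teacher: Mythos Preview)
Your proposal is correct and follows essentially the same case split as the paper: Theorem~\ref{thm:main-cover} for $c \ge c_0 = 16/3^{3/2}-2$ and Theorem~\ref{thm:smallc-cover2} for $c < c_0$ (you are right that Theorem~\ref{thm:smallc-cover} is not needed for the corollary itself; the paper only invokes it to show the bound is \emph{strictly} beaten for $c \le 1.147899$). The one noteworthy difference is how the small-$c$ inequality is verified: you propose a continuity/slope argument anchored at the boundary $c = c_0$, whereas the paper (Appendix~\ref{sec:smallc-details}) simply interleaves a piecewise-constant step function between the Theorem~\ref{thm:smallc-cover2} bound and $(2+c)^{1/2}$, reducing the claim to finitely many numerical checks. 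The paper's method is cruder but avoids analyzing the implicit function $\alpha(c)$; your approach would be cleaner if carried out, but is not worked out in the proposal.
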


When $c \le 1.147899$, then the running time provided by the best among \autoref{thm:smallc-cover} and \autoref{thm:smallc-cover2} is slightly faster. See \autoref{sec:smallc-details} for details.
At $c = 1$, the running time of \autoref{thm:smallc-cover2} matches the running time of the Set Cover algorithm of \cite{AmbainisBIKPV19} when $|\mathcal{F}|$ is subexponential in $n$.

\section{Applications}
\label{sec:applications}

We can now use \autoref{cor:simpletime} in combination with simple branching algorithms enumerating various vertex sets in graphs, in particular algorithms enumerating
maximal independent sets (equivalently, maximal cliques in the complement graph) in $O^*(3^{n/3})$ time \cite{FominK10} (we note that the algorithm of \cite{JohnsonYP88} is not a \emph{simple} branching algorithm),
maximal bicliques in $O^*(3^{n/3})$ time,
maximal induced matchings in $O^*(10^{n/5})$ time \cite{GuptaRS12},
maximal induced forests in 
$O(1.8527^n)$ time \cite{GaspersL17}, and
minimal $k$-hitting sets in $O^*((2-1/k)^n)$ time \cite{FominGLS19}.
The last result is used to enumerate maximal $\mathcal{H}$-free subgraphs, which have no induced subgraph isomorphic to any graph from the family $\mathcal{H}$ of graphs, all of which have at most $k$ vertices.
Some well-known $\mathcal{H}$-free graph classes are
\begin{itemize}
    \item \emph{cluster} graphs with $\mathcal{H} = \set{P_3}$, where $P_k$ denotes the path on $k$ vertices,
    \item \emph{triangle-free} graphs with $\mathcal{H} = \set{K_3}$, where $K_k$ denotes the complete graph on $k$ vertices,
    \item \emph{cographs} with $\mathcal{H} = \set{P_4}$ \cite{CorneilLS81},
    \item \emph{claw-free} graphs with $\mathcal{H} = \set{K_{1,3}}$, where $K_{k,\ell}$ denotes the complete bipartite graph with partite sets of size $k$ and $\ell$,
    \item \emph{trivially-perfect} graphs with $\mathcal{H} = \set{P_4, C_4}$ \cite{Golumbic78}, where $C_k$ denotes the cycle on $k$ vertices,
    \item \emph{threshold} graphs with $\mathcal{H} = \set{P_4, C_4, \overline{C_4}}$ \cite{ChvatalH73}, where $\overline{G}$ denotes the complement of $G$,
    \item \emph{split} graphs with $\mathcal{H} = \set{C_4, \overline{C_4}, C_5}$ \cite{FoldesH77}, and
    \item \emph{line} graphs, where $\mathcal{H}$ is a set of 9 graphs on at most 6 vertices each \cite{Beineke70}.
\end{itemize}

\begin{theorem}
    There are bounded-error quantum algorithms, which, for a graph on $n$ vertices and integer $k$, determine whether there is a $k$-packing, $k$-partitioning, or $k$-covering of $G$
    \begin{itemize}
        \item with maximal independent sets, maximal cliques, or maximal bicliques in $O(1.8554^n)$ time,
        \item with maximal cluster graphs or maximal triangle-free graphs in $O(1.9149^n)$ time, 
        \item with maximal cographs, maximal claw-free graphs, maximal trivially-perfect graphs, or maximal threshold graphs in $O(1.9365^n)$ time, 
        \item with maximal split graphs in $O(1.9494^n)$ time,
        \item with maximal line graphs in $O(1.9579^n)$ time, and
        \item with maximal induced forests in $O(1.9629^n)$ time.
    \end{itemize}
    There are bounded-error quantum algorithms, which, for a graph on $n$ vertices and integer $k$, determine whether there is a $k$-packing or $k$-partitioning of $G$
    \begin{itemize}
        \item with maximal induced matchings in $O(1.8934^n)$ time\footnote{Note that it is NP-hard to determine, for a graph $G=(V,E)$ and vertex subset $X\subseteq V$, whether there is a superset $Y\supseteq X$ that induces a 1-regular subgraph of $G$. This can be seen by a simple reduction from Independent Set.}.
    \end{itemize}
\end{theorem}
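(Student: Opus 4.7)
The plan is to apply \autoref{cor:simpletime} separately to each graph class listed, using as the branching algorithm $A$ required by \autoref{thm:main-cover} a standard simple enumeration algorithm for maximal induced subgraphs of the relevant type inside $G[X]$. Concretely, on input $G[X]$ I would invoke the algorithm of \cite{FominK10} for maximal independent sets (and its application to the complement graph for maximal cliques, together with the analogous $O^*(3^{n/3})$ bound for maximal bicliques); \cite{GuptaRS12} for maximal induced matchings; \cite{GaspersL17} for maximal induced forests; and \cite{FominGLS19} for minimal hitting sets of the hypergraph whose hyperedges are the vertex sets of induced copies of forbidden graphs from $\mathcal{H}$ inside $G[X]$, which yields the maximal $\mathcal{H}$-free induced subgraphs in $O^*((2-1/k)^{|X|})$ time where $k=\max_{H\in\mathcal{H}}|V(H)|$. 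All of these are simple branching algorithms, so Grover's search gives the required quadratic speedup.

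First I would verify the correctness hypothesis of \autoref{thm:main-cover}. For partitioning and packing, this reduces to the observation that if $S\in\cF$ is a maximal subgraph of the required kind in $G$ and $S\subseteq X$, then $S$ is still a maximal subgraph of that kind in $G[X]$: the property is hereditary and every would-be extension vertex $v\in X\setminus S$ that is blocked in $G$ is blocked already inside $G[X]$, because the obstruction---an adjacent vertex in $S$, an induced forbidden subgraph on $S\cup\{v\}$, an induced edge to $S$ in the matching case, etc.---lies entirely in $X$. For the covering variants, where $\cF$ is replaced by $\cFd$, any member $S_1\subseteq X$ of a $k$-cover that does not straddle $X$ can be extended to a maximal subgraph $S_1^\ast$ of the required type inside $G[X]$; since $S_1^\ast\supseteq S_1$ the $k$-cover property is preserved, and since $S_1^\ast$ is a subset of some maximal subgraph of $G$, we have $S_1^\ast\in\cFd$.

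Next I would substitute the resulting values of $c$ into the $O^*((2+c)^{n/2})$ bound of \autoref{cor:simpletime}: $c=3^{1/3}$ for independent sets/cliques/bicliques yields $\sqrt{2+3^{1/3}}<1.8554$; $c=10^{1/5}$ for induced matchings yields $<1.8934$; $c=1.8527$ for induced forests yields $<1.9629$; and $c=2-1/k$ with $k\in\{3,4,5,6\}$ gives the claimed bases $1.9149$, $1.9365$, $1.9494$, $1.9579$ for cluster/triangle-free, cograph/claw-free/trivially-perfect/threshold, split, and line graphs, respectively. The main obstacle I anticipate, and the reason the covering case for induced matchings is excluded from the statement, is that deciding whether a vertex set extends to a set inducing a $1$-regular subgraph is NP-hard (as pointed out in the footnote), so the induced-matching set system is polynomial time computable but not subset polynomial time computable. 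Consequently the covering framework---which relies on replacing $\cF$ by $\cFd$---does not apply, while packing and partitioning need only polynomial time computability and go through unchanged.
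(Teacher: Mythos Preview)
Your proposal is correct and follows essentially the same approach as the paper: apply \autoref{cor:simpletime} with the cited enumeration algorithms (\cite{FominK10}, \cite{GuptaRS12}, \cite{GaspersL17}, \cite{FominGLS19}) and read off the running times from $(2+c)^{n/2}$. The paper leaves the verification of the hypothesis of \autoref{thm:main-cover}---that a maximal-in-$G$ set contained in $X$ is also maximal in $G[X]$, and the $\cFd$ argument for covers---entirely implicit, whereas you spell it out; your explanation of why covering by maximal induced matchings is excluded (failure of subset polynomial time computability) also matches the paper's footnote.
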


In particular, this leads to a bounded-error quantum algorithm computing the chromatic number of an input graph in $O(1.8554^n)$ time; a graph can be covered with $k$ maximal independent sets iff it has chromatic number at most $k$.
In \autoref{sec:chromatic}, we expedite this algorithm by exploiting fast algorithms for coloring a graph with a small number of colors and this will enable us to partition the vertex set in a more balanced way in the divide-and-conquer steps.

Even though there is a simple branching algorithm that enumerates all minimal dominating sets in $O(1.7159^n)$ time \cite{FominGPS08}, this algorithm cannot be readily used for computing the domatic number using \autoref{thm:main-cover}. This is because when we consider a subset $X$ of the vertex set of $G=(V,E)$, we need to enumerate vertex subsets from $X$ that are minimal dominating sets for $G$, and not $G[X]$. In \autoref{sec:dom}, we prove that such minimal dominating sets can be enumerated in $O^*\left((2-\varepsilon)^{|X|}\right)$ time if $|X|$ is linear in $|V|$.

\section{Faster Computation of the Chromatic Number}
\label{sec:chromatic}

Assume the vertex set of input graph $G = (V, E)$ can be partitioned into independent sets $C = (I_1, I_2, \dots, I_\chi)$ where $\chi$ is the chromatic number of $G$. Denote by $n$ the number of vertices of $G$.

Using the algorithm from \autoref{thm:main-cover}, we can compute the chromatic number of $G$ in $O(1.8554^n)$ time. The family of subsets \cF corresponds to the independent sets of $G$ and the family $e(X, r(\cF,X))$ corresponds to the maximal independent sets of $G[X]$, which can be enumerated by a simple branching algorithm $A$ in $O(3^{|X|/3})$ time \cite{FominK10}.
We now prove a stronger result.

\begin{theorem}
    There is a bounded-error quantum algorithm, which for a graph on $n$ vertices, computes the chromatic number in $O(1.7956^n)$ time.
\end{theorem}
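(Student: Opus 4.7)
The plan is to refine the divide-and-conquer framework of \autoref{thm:main-cover} by first disposing of induced subgraphs with small chromatic number. Excluding these cases lets the main recursion operate on graphs whose optimal colorings have structurally small color classes, enabling a more balanced first-level split than the $|L|\ge n/2$ range used in \autoref{thm:main-cover}.

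As a common first step, I would run the preprocessing of \autoref{thm:preprocessing} with $\alpha = 1/4$, obtaining, for every $X \subseteq V$ with $|X| \le n/4$ and every $k$, whether $G[X]$ is $k$-colorable; this costs $O^*\bigl(\binom{n}{n/4}\bigr) = O(1.7548^n)$, within budget. Next, using Grover's search, I would enumerate vertex subsets $X \subseteq V$ of size at least $0.48n$ and check whether $G[X]$ is (a)~$5$-colorable, (b)~$6$-colorable and contains no $5$-colorable induced subgraph of size at least $0.48n$, or (c)~$7$-colorable. Cases (a)--(b) use the fast small-chromatic algorithms of \cite{ShimizuM22}; case (c) uses a new $7$-Coloring algorithm built from the preprocessed small-subset $k$-colorability table wrapped in one or two additional layers of Grover search. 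Whenever such an $X$ is found, $\chi(G)$ is resolved by recursively coloring the complement $V \setminus X$ (of size at most $0.52n$) with the right number of colors; this sub-call lives on a universe of size at most $0.52n$ and so fits within the target time budget via \autoref{thm:main-cover}.

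Otherwise, no vertex subset of size at least $0.48n$ satisfies any of (a)--(c), so $\chi(G) \ge 8$ and the union of any $k \le 7$ color classes of an optimal coloring has size strictly less than $0.48n$. This bounds the sizes of the largest color classes and, crucially, lets me partition the color classes of some optimal coloring into two groups $(L, R)$ whose sizes lie in a narrow band around $n/2$. I would then run the divide-and-conquer of \autoref{thm:main-cover}, but with the first-level Grover search restricted to splits whose larger side has size in this narrow band; maximal independent sets of each piece are enumerated by the simple branching algorithm of \cite{FominK10} in $O^*(3^{|X|/3})$ time. The combinatorial saving from this restricted range, tuned against the exclusion threshold $0.48n$, yields the claimed running time $O(1.7956^n)$.

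The main obstacle is the design of the new $7$-Coloring algorithm in case (c): it must combine the preprocessed small-subset data with carefully nested Grover searches so as to stay within the overall budget. A secondary obstacle is the structural argument in the last step---showing that the absence of a large low-chromatic induced subgraph forces the existence of a sufficiently balanced partition of the color classes of some optimal coloring, and that this balanced partition can indeed be found by the restricted Grover search within the target running time.
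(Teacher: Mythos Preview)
Your outline matches the paper's proof for cases (a) and (b), but your treatment of the 7-colorable case and of the residual ``Otherwise'' case does not achieve the target running time.

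For case (c), checking 7-colorability of \emph{all} subsets $X$ with $|X|\ge 0.48n$ is too expensive. The natural 7-coloring routine (split $X$ into a 3-colorable part and a 4-colorable part, handle the latter with one MIS enumeration plus a QRAM lookup) runs in $O^*(1.5622^{|X|})$ time, and moreover needs $|X|\le \tfrac{7}{3}\alpha n$ for the lookup to be valid. Grover-searching over all $|X|\ge 0.48n$ with a $1.5622^{|X|}$ oracle costs $\sqrt{(1+1.5622^2)^n}\approx 1.855^n$; even restricting to $|X|\le \tfrac{7}{3}\alpha n$ still leaves $|X|\approx 0.58n$, where $\sqrt{\binom{n}{0.58n}}\cdot 1.5622^{0.58n}\approx 1.82^n$. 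No known quantum 7-coloring algorithm is fast enough (you would need base $\le 1.491$) to make this blanket check fit in $O(1.7956^n)$.

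Your ``Otherwise'' case also falls short. Ruling out 7-colorable subsets of size $\ge 0.48n$ only yields $|I_1|+\cdots+|I_7|<0.48n$, hence $q\ge 7$; but this allows $|L|$ as large as $9n/16\approx 0.5625n$ (take nine color classes each of size just below $n/16$). At $|L|=0.5625n$ the standard second-level split gives $\sqrt{\binom{n}{0.5625n}\,2.4423^{0.5625n}}\approx 1.811^n$, again over budget. Relatedly, your choice $\alpha=1/4$ is too small: several of the leaf lookups in the balanced-split regime require $\alpha\ge 7/26\approx 0.27$.

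The paper's proof avoids both pitfalls by \emph{not} checking 7-colorability uniformly. After Cases~1 and~2 (your (a) and (b)), it assumes only that the six largest classes sum to $<0.48n$, then splits further on the size of $T=\bigcup_{i\le q}I_i$: in Case~3.1 one has $|L|\in[n/2,\,7n/13]$, and the 7-coloring routine is invoked \emph{only on this specific $L$}, where $\sqrt{\binom{n}{7n/13}}\cdot 1.5622^{7n/13}\approx 1.7956^n$ is exactly at budget; in Case~3.2 one has $|L|\in[6n/13,\,n/2]$, and a separate combinatorial lemma shows that $R$ (not $L$) can be split into two parts of sizes in $[\tfrac{3}{7}|R|,\tfrac{4}{7}|R|]$, each reducible to $\le |R|/2\le 0.27n$ after one MIS removal. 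This second refinement---balancing the split of $R$ rather than pushing more colorability checks---is the missing ingredient in your plan.
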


Assume, w.l.o.g., that $|I_1| \ge |I_2| \ge |I_3| \ge |I_4| \ge \dots \ge |I_\chi|$.

In the first step, we use the algorithm from \autoref{thm:preprocessing} with $\alpha = 0.27$ and store the result in QRAM. That is, for each $\alpha$-small subset $X \subset V$, we find the chromatic number of $X$ by finding the smallest $k$ such that there exists a $k$-partition of $X$ into independent sets. This takes $O^*(\binom{n}{0.27 \cdot n}) = O^*(1.79187^n)$ time.

We then consider a few different possibilities for the large sets in $C$ and present an algorithm which finds a partition into a smallest number of independent sets for each of these cases. These possibilities cover all configurations of $C$ so one result is guaranteed to find the chromatic number by detecting the partition $C$ above, or an equivalent partition. The cases are as follows:

\begin{enumerate}
    \item $G$ contains a vertex subset of size at least $0.48 \cdot n$ that is the union of at most five sets in $C$.\label{cs:5col}
    
    We check for $k$-coloring for all $1 \le k \le n$ and take the minimum valid value of $k$ as the chromatic number. To check for a certain $k$, we use Grover's search over all $Q \subset V$ where $|Q| \le 0.52 \cdot n$ and check whether $G-Q$ is 5-colorable. If it is 5-colorable, we compute its chromatic number by checking its $k'$-colorability for $k' < 5$. We then solve the instance $(Q, r(\cF, Q))$ to find a $k_Q$-coloring, where $k_Q = k - \chi(G-Q)$, using Grover's search to divide $Q$ into $(QL, QR)$ where $|QL| \ge 0.27 \cdot n$ and $k_{QL} + k_{QR} = k_Q$. For $(QL, r(\cF, QL))$ (and similarly for $QR$) we use Grover's search on the branching algorithm $A$ to enumerate candidate sets $S_1 \in e(QL, r(\cF,QL))$; if $|QL| - |S_1| > 0.27 \cdot n$, then this branch is unsuccessful; otherwise, find $\chi(QL \setminus S_1)$ in QRAM.
    
    The running time of this case with quantum 5-coloring in $O^*(1.4695^n)$ \cite{ShimizuM22} is
    \begin{align*}
        &\; O^*\left(
        \sum_{l = 0}^{\lfloor 0.52 \cdot n \rfloor} \sqrt{\binom{n}{l}} \left( \sum_{l' = \lceil \frac{l}{2} \rceil}^l \sqrt{\binom{l}{l'} 3^{l'/3}} + 1.4695^{n - l}\right) \right)\\
        =&\; O^*\left(
        \sqrt{\sum_{l = 0}^{\lfloor 0.52 \cdot n \rfloor} \binom{n}{l} \sum_{l' = \lceil \frac{l}{2} \rceil}^l \binom{l}{l'} 3^{l'/3}}
        + \sqrt{\sum_{l = \lceil 0.48 \cdot n \rceil}^{n} \binom{n}{l} \left( 1.4695^l \right)^2} \right)\\
        =&\; O^*\left(
        \sqrt{\binom{n}{0.52 \cdot n} 2.4423^{0.52 \cdot n}}
        + \sqrt{(1 + 1.4695^2)^n} \right)\\
        =&\; O^*\left(1.7831^n + 1.7775^n\right)
        =O^*(1.7831^n)
    \end{align*}
    
    \item $G$ contains a vertex subset of size at least $0.48 \cdot n$ that is the union of six sets in $C$, but does not contain a vertex subset of size at least $0.48 \cdot n$ that is the union of at most five sets in $C$. \label{cs:6col}
    
    In this case we have
    \begin{alignat*}{2}
        && |I_1| + |I_2| + \dots + |I_5| &< 0.48 \cdot n\\
        &\;\:\Longrightarrow& 5 \cdot |I_6| \le 5 \cdot |I_5| &< 0.48 \cdot n\\
        &\;\:\Longrightarrow& \;|I_1| + |I_2| + \dots + |I_6| &< 0.576 \cdot n,
    \end{alignat*}
    which, along with the condition $0.48 \cdot n \le |\bigcup_{i = 1}^6 I_i|$ gives
    \begin{align*}
        0.424 \cdot n < |V \setminus \bigcup_{i = 1}^6 I_i| \le 0.52 \cdot n.
    \end{align*}
    
    For this case we check for $k$-coloring for all $1 \le k \le n$ and take the minimum valid value of $k$ as the chromatic number. To check for a certain $k$, we use Grover's search over all $Q \subset V$ where $0.424 \cdot n < |Q| \le 0.52 \cdot n$ and check whether $G-Q$ is 6-colorable. If so, we check whether the instance $(Q, r(\cF, Q))$ is $(k - 6)$-colorable in the same way as in \autoref{cs:5col}.
    
    The running time of this case with quantum 6-coloring in $O^*(1.5261^n)$ \cite{ShimizuM22} is
    
    \begin{align*}
        &\; O^*\left(\sum_{l = \lceil 0.424 \cdot n \rceil}^{\lfloor 0.52 \cdot n \rfloor} \sqrt{\binom{n}{l}} \left( \sum_{l' = \lceil \frac{l}{2} \rceil}^l \sqrt{\binom{l}{l'} 3^{l'/3}} + 1.5261^{n - l}\right)
        \right)\\
        =&\; O^*\left(
        \sqrt{\sum_{l = \lceil 0.424 \cdot n \rceil}^{\lfloor 0.52 \cdot n \rfloor} \binom{n}{l} \sum_{l' = \lceil \frac{l}{2} \rceil}^l \binom{l}{l'} 3^{l'/3}}
        + \sqrt{\sum_{l = \lceil 0.48 \cdot n \rceil}^{\lfloor 0.576 \cdot n \rfloor} \binom{n}{l} \left( 1.5261^l \right)^2}
        \right)\\
        =&\; O^*\left(
        \sqrt{\binom{n}{0.52 \cdot n} 2.4423^{0.52 \cdot n}}
        + \sqrt{\binom{n}{0.576 \cdot n} 1.5261^{2\cdot 0.576 \cdot n}}
        \right)\\
        =&\; O^*\left(1.7831^n + 1.7937^n \right)
        = O^*(1.7937^n)
    \end{align*}
    
    \item $G$ does not contain a vertex subset of size at least $0.48 \cdot n$ that is the union of at most six sets in $C$.
    That is $|\bigcup_{i=1}^6 I_i| < 0.48 \cdot n$.
    
    Let $T = \bigcup_{i = 1}^{q} I_i$ where $q$ is the maximum index such that $|T| < \frac{n}{2}$. As $|\bigcup_{i = 1}^6 I_i| < 0.48 \cdot n$, we have $q \ge 6$. We consider two possibilities for the size of $T$ and present an algorithm which finds a partition into a smallest number of independent sets for each case. As a valid coloring is computed in each case, the smallest partition gives the chromatic number if $|\bigcup_{i=1}^6 I_i| < 0.48 \cdot n$.
    
    \begin{enumerate}
        \item Consider the case where $|T| < \frac{6 \cdot n}{13}$: \makeatletter\def\@currentlabel{3.1}\makeatother \label{cs:smallT}
        
        Let $L = \bigcup_{i = 1}^{q + 1} I_i$ and $R = \bigcup_{q + 2}^{\chi} I_i$. We have
        \begin{alignat*}{2}
            && |I_1| + |I_2| + \dots + |I_{q}| &< \frac{6 \cdot n}{13}\\
            &\;\:\Longrightarrow& 6 \cdot |I_{q+1}| \le q \cdot |I_{q}| &< \frac{6 \cdot n}{13}\\
            &\;\:\Longrightarrow& \;|I_1| + |I_2| + \dots + |I_{q+1}| &< \frac{7 \cdot n}{13}\\
            &\iff& |L|&< \frac{7 \cdot n}{13}
        \end{alignat*}
        and $0.5 \cdot n \le |L|$ from the definition of $q$.
        
        Assume $L$ is not 7-colorable. There must be more than 7 independent set in the construction of of $L$, which means
        \begin{alignat*}{2}
            && q + 1 > 7 \iff q \ge 7\\
            &\;\:\Longrightarrow& 7 \cdot |I_{q+1}| < \frac{6 \cdot n}{13}\\
            &\;\:\Longrightarrow& |L| < \frac{48 \cdot n}{91}
        \end{alignat*}
        
        By contraposition, $|L| \ge \frac{48 \cdot n}{91}$ implies that $L$ is 7-colorable.
        
        \begin{lemma} \label{lma:7color}
            After running the algorithm presented in \autoref{thm:preprocessing}, there exists a bounded-error quantum algorithm to check the 7-colorability of a subset $X \subseteq U$ where $|X| \le \frac{7}{3} \cdot \alpha \cdot n$ in $O^*(1.5622^{|X|})$ time.
        \end{lemma}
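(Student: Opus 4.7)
The plan is to reduce the $7$-colorability test on $X$ to two nested Grover searches built on top of the QRAM preprocessing. We aim for a decomposition $X = R \cup I \cup Q$ such that $R$ and $Q$ are both $3$-colorable with $|R|, |Q| \le \alpha n$ (so their $3$-colorability can be looked up in QRAM in polynomial time), and $I$ is a maximal independent set of $L := X \setminus R$. The condition $|X| \le \tfrac{7}{3}\alpha n$ is used here: given any $7$-coloring of $X$, the union of the three smallest color classes has total size at most $3|X|/7 \le \alpha n$, giving a candidate $R$. Moreover, the residue $L$ is then $4$-colorable and by the pigeonhole principle its maximum color class has size at least $|L|/4$; since $|L|/4 \ge |L| - \alpha n$ whenever $|L| \le \tfrac{4}{3}\alpha n$, the standard greedy-extension argument provides a maximal independent set $I$ of $L$ whose complement $L \setminus I$ is $3$-colorable and of size at most $\alpha n$, so again fits in QRAM.

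The algorithm then performs an outer Grover search over subsets $R \subseteq X$ with $|R| \le \alpha n$; for each $R$ it verifies via QRAM that $R$ is $3$-colorable, and, if so, runs an inner Grover search over maximal independent sets $I$ of $L$ using the classical $O^*(3^{|L|/3})$-time simple branching algorithm of \cite{FominK10}. For each enumerated $I$ the algorithm verifies $|L \setminus I| \le \alpha n$ and looks up the $3$-colorability of $L \setminus I$ in QRAM. The algorithm accepts iff some pair $(R, I)$ passes both checks; by the structural observation above, such a pair exists iff $X$ is $7$-colorable, and bounded-error Grover returns the correct answer with constant probability.

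The total running time, analysed through two-level amplitude amplification, is
\[
O^*\!\left(\sqrt{\sum_{r=0}^{\alpha n} \binom{|X|}{r}\cdot 3^{(|X|-r)/3}}\right).
\]
Applying the binomial theorem to the unconstrained sum gives $\sum_{r=0}^{|X|} \binom{|X|}{r}(3^{1/3})^{|X|-r} = (1 + 3^{1/3})^{|X|}$, whose dominant term is at $r^* = |X|/(1 + 3^{1/3})$. Since $1 + 3^{1/3} \approx 2.4422 > 7/3$, the hypothesis $|X| \le \tfrac{7}{3}\alpha n$ forces $r^* \le \alpha n$, so the partial sum above is within a polynomial factor of the unconstrained sum. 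Consequently the running time simplifies to $O^*\bigl((1 + 3^{1/3})^{|X|/2}\bigr) = O^*(1.5622^{|X|})$.

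The main obstacle is the structural existence claim: one must verify that, for every $7$-colorable $X$ with $|X| \le \tfrac{7}{3}\alpha n$, some $R$ in the allowed size range admits the required decomposition of $L$. The pigeonhole bound $|X|/7$ for the largest color class exactly meets the requirement $|L|/4 \ge |L| - \alpha n$ at the extreme $|X| = \tfrac{7}{3}\alpha n$, so balancing holds only marginally; some care is needed to handle unbalanced $7$-colorings, where one may have to construct a suitable $R$ not as a union of color classes but by splitting a large class and assigning its parts consistently across the three color slots available inside $R$.
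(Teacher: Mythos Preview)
Your proposal has a genuine correctness gap. The structural claim you need is: whenever $X$ is $7$-colourable with $|X|\le \tfrac{7}{3}\alpha n$, there exist a $3$-colourable $R\subseteq X$ with $|R|\le\alpha n$ and a maximal independent set $I$ of $L=X\setminus R$ such that $Q:=L\setminus I$ is $3$-colourable and $|Q|\le\alpha n$. You flag in your last paragraph that the simple choice ``$R=$ three smallest colour classes'' may fail and suggest splitting classes across colour slots; but in fact no choice of $R$ works in general. Take $G[X]$ to be the complete $7$-partite graph with part sizes $65,65,2,2,2,2,2$ and $\alpha n=60$, so $|X|=140=\tfrac{7}{3}\alpha n$. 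In a complete multipartite graph every independent set lies inside a single part, so any $3$-colourable induced subgraph meets at most three parts, and any MIS of an induced subgraph equals the intersection of that subgraph with one part. A slot-counting argument ($R$ meets $\le 3$ parts, $I$ meets $\le 1$, $Q$ meets $\le 3$, and all seven non-empty parts must be met, so each part lies in exactly one piece) forces a $3$--$1$--$3$ split of the parts among $R,I,Q$. But two parts have size $65>\alpha n$; at most one of them can be $I$, so the other lands in $R$ or $Q$, making that piece exceed $\alpha n$. Your algorithm would therefore wrongly reject this $7$-colourable $X$.

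The paper sidesteps this obstruction by an asymmetric split: it takes $XL$ to be the three \emph{largest} colour classes (so $|XL|\ge 3|X|/7$) and tests $3$-colourability of $XL$ via F\"urer's dedicated quantum $3$-colouring algorithm in $O^*(1.1528^{|XL|})$ time, with no size cap on $XL$. Only the remaining piece $XR\setminus S_1$ (provably of size $\le 3|X|/7\le\alpha n$) is looked up in QRAM. Incidentally, the truncation $|XR|\le 4|X|/7$ is also what yields the stated $O^*(1.5622^{|X|})$: your unrestricted outer sum evaluates to $(1+3^{1/3})^{|X|}$, whose square root is about $1.5628^{|X|}$, not $1.5622^{|X|}$.
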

        
        \begin{proof}
            Assume that $X$ is 7-colorable and there is a partition $D = (J_1, J_2, \dots, J_7)$ of $X$ into independent sets. Assume, w.l.o.g., that $|J_1| \ge |J_2| \ge \dots \ge |J_7|$. Let $TL = J_1 \cup J_2 \cup J_3$ and $TR = J_4 \cup J_5 \cup J_6 \cup J_7$. We have
            \begin{alignat*}{3}
                && \frac{|TL|}{3} \ge |J_3| \ge |J_4| \ge \frac{|TR|}{4}\\
                &\;\:\Longrightarrow& |TL| + \frac{4}{3} \cdot |TL| \ge |X|\\
                &\iff& |TL| \ge \frac{3}{7} \cdot |X|.
            \end{alignat*}
            
            Consider $TR \setminus J_4$,
            \begin{alignat*}{2}
                && |TR| \le \frac{4}{7} \cdot |X| \text{ and }
                |J_4| \ge \frac{1}{4} \cdot |TR|\\
                &\;\:\Longrightarrow& |TR \setminus J_4| \le \frac{3}{7} \cdot |X| \le \alpha \cdot n.
            \end{alignat*}
            So there exists a subset $S_1 \in e(TR, r(\cF,TR))$ such that $|TR \setminus S_1| \le \alpha \cdot n$.
            
            We use Grover's search to divide $X$ into two sets $(XL, XR)$ where $|XL| \ge \frac{3 \cdot |X|}{7}$. We check $XL$ for 3-colorability using a fast quantum algorithm. If it is 3-colorable, we use Grover's search on the branching algorithm $A$ to enumerate candidate sets $S_1 \in e(XR, r(\cF,XR))$; if $|XR| - |S_1| > \alpha \cdot n$, then this branch is unsuccessful; otherwise, check whether $\chi(XR \setminus S_1) \le 3$ in QRAM. If $X$ is 7-colorable and $(XL, XR) = (TL, TR)$, we indeed detect $7$-colorability.
            
            The running time of this algorithm with quantum 3-coloring in $O^*(1.1528^n)$ time \cite{Furer08} is
            \begin{align*}
                &\; O^*\left(
                \sum_{l = \lceil \frac{3 \cdot |X|}{7} \rceil}^{|X|} \sqrt{\binom{|X|}{l}} \left( 1.1528^l + \sqrt{3^{(|X| - l) / 3}} \right)
                \right)\\
                =&\; O^*\left( \sqrt{\sum_{l = \lceil \frac{3 \cdot |X|}{7} \rceil}^{|X|} \binom{|X|}{l} \left(1.1528^l\right)^2 }
                + \sqrt{\sum_{l = 0}^{\lfloor \frac{4 \cdot |X|}{7} \rfloor} \binom{|X|}{l} 1.4423^l}
                \right)\\
                =&\; O^*\left( \sqrt{(1 + 1.1528^2)^{|X|}}
                + \sqrt{\binom{|X|}{\frac{4 \cdot |X|}{7}} 1.4423 ^ {\frac{4 \cdot |X|}{7}}}
                \right)\\
                =&\; O^*(1.5261^{|X|} + 1.5622^{|X|})
                = O^*(1.5622^{|X|})
            \end{align*}
        \end{proof}
        
        We check for $k$-coloring for all $1 \le k \le n$ and take the minimum valid value of $k$ as the chromatic number. To check for a certain $k$, we use Grover's search to divide $V$ into two sets $(L, R)$ where $0.5 \cdot n \le |L| < \frac{7 \cdot n}{13}$ and $k$ into $k_L + k_R$. When $|L| < \frac{48 \cdot n}{91}$, we solve the subinstances $(L, r(\cF, L))$ and $(R, r(\cF, R))$ as in \autoref{thm:main-cover}. When $|L| \ge \frac{48 \cdot n}{91}$, we solve the subinstance $(R, r(\cF, R))$ as in \autoref{thm:main-cover} and run the algorithm in \autoref{lma:7color} on $L$; if $L$ isn't 7-colorable, then this branch is unsuccessful; otherwise the subinstance $(L, r(\cF, L))$ is $k_L$-colorable for $k_L \ge 7$. This algorithm finds the chromatic number of $G$ when $(L, R)$ are equal to the ones we constructed from $C$.
        
        The running time of this algorithm is
        \begin{align*}
            &\; O^*\left(\sum_{l = \lceil 0.5 \cdot n \rceil}^{\lfloor \frac{7}{13} \cdot n \rfloor} \sqrt{\binom{n}{l}} \left(
            \begin{cases}
                \sum_{l' = \lceil \frac{l}{2} \rceil}^l \sqrt{\binom{l}{l'} 3^{l'/3}} &,\; l < \frac{48 \cdot n}{91}\\
                1.5622^l &,\; l \ge \frac{48 \cdot n}{91}
            \end{cases}
            + \sum_{l' = \lceil \frac{n - l}{2} \rceil}^{n - l} \sqrt{\binom{n - l}{l'} 3^{l'/3}}
            \right)\right)\\
            =&\; O^*\left(
            \sqrt{\sum_{l = \lceil \frac{6}{13} \cdot n \rceil}^{\lfloor \frac{48}{91} \cdot n \rfloor} \binom{n}{l} \sum_{l' = \lceil \frac{l}{2} \rceil}^l \binom{l}{l'} 3^{l'/3}}
            + \sqrt{\sum_{l = \lceil \frac{48}{91} \cdot n \rceil}^{\lfloor \frac{7}{13} \cdot n \rfloor} \binom{n}{l} \left(1.5622^l\right)^2}
            \right)\\
            =&\; O^*\left(
            \sqrt{\binom{n}{\frac{48}{91} \cdot n} 2.4423^{\frac{48}{91} \cdot n}}
            + \sqrt{\binom{n}{\frac{7}{13} \cdot n} 1.5622^{2\cdot \frac{7}{13} \cdot n}}
            \right)\\
            =&\; O^*\left(1.7884^n + 1.7956^n \right)
            = O^*(1.7956^n)
        \end{align*}
        
        \item Otherwise, consider the case where $\frac{6 \cdot n}{13} \le |T| < \frac{n}{2}$: \makeatletter\def\@currentlabel{3.2}\makeatother \label{cs:bigT}
        
        Let $L = T = \bigcup_{i = 1}^{q} I_i$ and $R = \bigcup_{q + 1}^{\chi} I_i$. We partition $R$ into independent sets $C' = (I'_1, I'_2, \dots, I'_{\chi'})$ where $I'_i = I_{q + i}$ for all $1 \le i \le \chi'$. By the definition of $C$, this is an optimal partition of $R$. As $q \ge 6$, we get $\frac{|L|}{6} \ge |I_{q}| \ge |I_{q + i}| = |I'_i|$ for all $1 \le i \le \chi'$.
        
        \begin{lemma} \label{lma:bound}
            Assume we have a partition $D = (J_1, J_2, \dots, J_{m})$ of a set $X$ where $t \ge |J_1| \ge |J_2| \ge \dots \ge |J_{m}|$. Define $p = \left\lceil \frac{|X|}{t} \right\rceil$ and $r = \frac{|X|}{p}$. For any non-negative integer $a$, $1 \le a \le p - 2$, there exists an integer $k$ such that $a \cdot r \le |\bigcup_{i = 1}^{k} J_i| \le (a + 1) \cdot r$.
        \end{lemma}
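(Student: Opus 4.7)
I would argue by contradiction, combining the monotonicity of the block sizes with the cap $|J_i|\le t$ and the definition $p=\lceil |X|/t\rceil$. Set $S_k = |\bigcup_{i=1}^{k} J_i| = \sum_{i=1}^{k} |J_i|$, so $S_0 = 0$ and $S_m = |X| = pr$, and the task is to exhibit some $k$ with $S_k \in [ar,(a+1)r]$. Assume, for contradiction, that no such $k$ exists. Since $S_0 = 0 < ar$ (using $a \ge 1$) and $S_m = pr \ge (a+1)r$ (using $a \le p-1$), there must be an index $k^*$ at which the partial sums ``skip'' the target interval: $S_{k^*} < ar$ and $S_{k^*+1} > (a+1)r$. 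The size of this skip satisfies $|J_{k^*+1}| = S_{k^*+1}-S_{k^*} > r$.

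By monotonicity of the $|J_i|$'s, $|J_i| \ge |J_{k^*+1}| > r$ for every $i \le k^*+1$. When $k^* \ge 1$, this gives $S_{k^*} > k^* r$, which together with $S_{k^*} < ar$ forces $k^* < a$; when $k^* = 0$, $k^* < a$ holds trivially from $a \ge 1$. In either case $k^*+1 \le a$, so using the cap $|J_i|\le t$ we get $S_{k^*+1} \le (k^*+1)\, t \le a\, t$. Combined with $S_{k^*+1} > (a+1)r = (a+1)|X|/p$, this yields $apt > (a+1)|X|$. Since $p = \lceil |X|/t\rceil$ implies $|X| > (p-1)t$, we obtain $apt > (a+1)(p-1)t$, hence $ap > (a+1)(p-1) = ap + p - a - 1$, i.e.\ $p < a+1$. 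This contradicts $a \le p-2$, so some $k$ with $S_k \in [ar,(a+1)r]$ must exist.

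The key step to get right is the invariant that a skip over an interval of length exactly $r$ forces the skipping block to strictly exceed $r$, and that by the decreasing ordering this in turn forces \emph{all} preceding blocks to exceed $r$; this simultaneously caps the index at which skipping can occur (by $a$) and allows the upper bound $|J_i|\le t$ to kick in at the end. The remainder is routine bookkeeping with the defining inequalities $(p-1)t < |X| \le pt$ of $p$, and I do not anticipate any further obstacle.
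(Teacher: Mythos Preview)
Your proof is correct and follows essentially the same contradiction strategy as the paper: both locate the index where the partial sums jump over $[ar,(a+1)r]$ and then derive the arithmetic contradiction from $at > (a+1)r$ together with $(p-1)t < |X|$. Your argument is in fact slightly more streamlined: the paper splits into two cases ($k \ge a$ and $k < a$) and disposes of the first separately, whereas you observe directly that the skip forces $|J_{k^*+1}|>r$, hence by monotonicity $S_{k^*}>k^*r$, which with $S_{k^*}<ar$ immediately gives $k^*<a$ and lands you in the paper's second case without a branch.
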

        
        \begin{proof}
            Let $k$ be the largest index such that $|\bigcup_{i = 1}^{k} J_i| \le (a + 1) \cdot r$. We will prove that $a \cdot r \le |\bigcup_{i = 1}^{k} J_i|$ and hence $k$ satisfies the constraints.\\
            
            Assume that $|\bigcup_{i = 1}^{k} J_i| < a \cdot r$. There are two cases:
            \begin{enumerate}
                \item[1.] $k \ge a$. We have 
                \begin{alignat*}{2}
                    && |J_1| + |J_2| + \dots + |J_k| &< a \cdot r\\
                    &\;\:\Longrightarrow& \;a \cdot |J_{k + 1}| \le k \cdot |J_{k}| &< a \cdot r\\
                    &\;\:\Longrightarrow& \;|J_1| + |J_2| + \dots + |J_{k+1}| &< a \cdot r + r
                \end{alignat*}
                which reaches a contradiction: $k$ is not the largest index such that $|\bigcup_{i = 1}^{k} J_i| \le (a + 1) \cdot r$.
                
                \item[2.] $k < a \implies k + 1 \le a$. We have
                \begin{alignat*}{2}
                    && |\bigcup_{i = 1}^{k + 1} J_i| &> (a + 1) \cdot r\\
                    &\;\:\Longrightarrow& \;(k + 1) \cdot t &> (a + 1) \cdot r\\
                    &\;\:\Longrightarrow& a \cdot t &> (a + 1) \cdot r.
                \end{alignat*}
                From $p = \left\lceil \frac{|X|}{t} \right\rceil$,
                \begin{alignat*}{2}
                    && p - 1 &< \frac{|X|}{t}\\
                    &\iff& \;(p - 1) \cdot t &< |X|\\
                    &\iff& (p - 1) \cdot t &< p \cdot r.
                \end{alignat*}
                We also have
                \begin{alignat*}{2}
                    && a < p - 1 \text{ and } t = \frac{|X|}{|X| \mathbin{/} t} \ge \frac{|X|}{\lceil |X| \mathbin{/} t \rceil} = r\\
                    &\;\:\Longrightarrow& (a - (p - 1)) \cdot t \le (a - (p - 1)) \cdot r.
                \end{alignat*}
                Adding the last two results, we get
                \begin{align*}
                    a \cdot t &< (a + 1) \cdot r
                \end{align*}
                which is a contradiction.
            \end{enumerate}
            Therefore $a \cdot r \le |\bigcup_{i = 1}^{k} J_i|$ and the proof is complete.
        \end{proof}
        
        Let $T_a = \bigcup_{i = 1}^{q'} I'_i$ and $T_b = \bigcup_{i = 1}^{q' + 1} I'_i$ where $q'$ is the maximum index such that $|T_a| < \frac{|R|}{2}$. Note that $\frac{|R|}{2} \le |T_b|$. When we apply \autoref{lma:bound} with $X = R$, $D = C'$ and $t = \frac{|L|}{6}$, we get
        \begin{alignat*}{2}
            && \frac{|X|}{t} = \frac{6 \cdot |R|}{|L|} \le \frac{6 \cdot \frac{7 \cdot n}{13}}{\frac{6 \cdot n}{13}} &= 7\\
            && \text{and\quad}\frac{6 \cdot |R|}{|L|} > \frac{6 \cdot \frac{n}{2}}{\frac{n}{2}} &= 6 \\
            &\;\:\Longrightarrow& p = \left\lceil \frac{|X|}{t} \right\rceil &= 7.
        \end{alignat*}
        If we let $a = 3$, the lemma states that there exists a $k$ such that $\frac{3 \cdot |R|}{7} \le |\bigcup_{i = 1}^{k} I'_i| \le \frac{4 \cdot |R|}{7}$. As $q'$ and $q' + 1$ differ by 1, it is not possible that $|\bigcup_{i = 1}^{q'} I'_i| < \frac{3 \cdot |R|}{7} \le |\bigcup_{i = 1}^{k} I'_i| \le \frac{4 \cdot |R|}{7} < |\bigcup_{i = 1}^{q' + 1} I'_i|$. So, either $\frac{3 \cdot |R|}{7} \le |T_a| < \frac{|R|}{2}$ or $\frac{|R|}{2} \le |T_b| \le \frac{4 \cdot |R|}{7}$.
        We let $TL = T_a$ or $TL = T_b$ such that $\frac{3 \cdot |R|}{7} \le |TL| \le \frac{4 \cdot |R|}{7}$ and $TR = R \setminus TL$.
        Note that in both cases, removing the independent set $I'_{q' + 1}$ from either $TL$ or $TR$ would leave both subsets with a size $\le \frac{|R|}{2}$.
        Therefore, there exists a subset $S_1 \in e(TL, r(\cF,TL))$ (resp. for $TR$) such that $|TL \setminus S_1| \le \frac{|R|}{2} \le 0.27$ (and similarly for $TR$).\\
        
        We check for $k$-coloring for all $1 \le k \le n$ and take the minimum valid value of $k$ as the chromatic number. To check for a certain $k$, we use Grover's search to divide $V$ into two sets $(L, R)$ where $\frac{6 \cdot n}{13} \le |L| < 0.5 \cdot n$ and $k$ into $k_L + k_R$. We solve the subinstance $(L, r(\cF, L))$ as in \autoref{thm:main-cover}. We also solve the subinstance $(R, r(\cF, R))$ as in \autoref{thm:main-cover}, except that $R$ is divided into $(RL, RR)$ with $\frac{3 \cdot |R|}{7} \le |RL| \le \frac{4 \cdot |R|}{7}$ instead of $|RL| > 0.5 \cdot |R|$. This algorithm finds the chromatic number of $G$ when $(L, R)$ are equal to the ones we constructed from $C$ and $(RL, RR) = (TL, TR)$.
        
        The running time of this algorithm is
        \begin{align*}
            &\; O^*\left(\sum_{l = \lceil \frac{6}{13} \cdot n \rceil}^{\lfloor 0.5 \cdot n \rfloor} \sqrt{\binom{n}{l}} \left(
            \sum_{l' = \lceil \frac{l}{2} \rceil}^l \sqrt{\binom{l}{l'} 3^{l'/3}}
            + \sum_{l' = \lceil \frac{3 \cdot (n - l)}{7} \rceil}^{\lfloor \frac{4 \cdot (n - l)}{7} \rfloor} \sqrt{\binom{n - l}{l'} 3^{l'/3}}
            \right)\right)\\
            =&\; O^*\left(
            \sqrt{\sum_{l = \lceil \frac{6}{13} \cdot n \rceil}^{\lfloor 0.5 \cdot n \rfloor} \binom{n}{l} \sum_{l' = \lceil \frac{l}{2} \rceil}^l \binom{l}{l'} 3^{l'/3}}
            + \sqrt{\sum_{l = \lceil 0.5 \cdot n \rceil}^{\lfloor \frac{7}{13} \cdot n \rfloor} \binom{n}{l} \sum_{l' = \lceil \frac{3 \cdot l}{7} \rceil}^{\lfloor \frac{4 \cdot l}{7} \rfloor} \binom{l}{l'} 3^{l'/3}}
            \right)\\
            =&\; O^*\left(
            \sqrt{\sum_{l = \lceil \frac{6}{13} \cdot n \rceil}^{\lfloor 0.5 \cdot n \rfloor} \binom{n}{l} 2.4423^l}
            + \sqrt{\sum_{l = \lceil 0.5 \cdot n \rceil}^{\lfloor \frac{7}{13} \cdot n \rfloor} \binom{n}{l} 2.4405^l}
            \right)\\
            =&\; O^*\left(
            \sqrt{\binom{n}{0.5 \cdot n} 2.4423^{0.5 \cdot n}}
            + \sqrt{\binom{n}{\frac{7}{13} \cdot n} 2.4405^{\frac{7}{13} \cdot n}}
            \right)\\
            =&\; O^*\left(1.7680^n + 1.7956^n \right)
            = O^*(1.7956^n)
        \end{align*}
    \end{enumerate}
\end{enumerate}

We observe that the overall running time of the algorithm is
\begin{align*}
    &\; O^*\left(\sqrt{
        \binom{n}{\frac{7}{13} \cdot n} \binom{\frac{7}{13} \cdot n}{\frac{4}{13} \cdot n} 3^{\frac{1}{3} \cdot \frac{4}{13} \cdot n}
    }\right)\\
    =&\; O^*\left(\left(
    \frac{\sqrt{13}}{2^{7/13} \cdot 3^{23/78}}
    \right)^n\right)\\
    =&\; O^*(1.7956^n)
\end{align*}
We reach this worst case when $\chi(G) = 13$ and $|I_1| = |I_2| = |I_3| = \dots = |I_{13}|$. An example of such a graph is the disjoint union of $\frac{n}{13}$ complete graphs on 13 vertices. If the universe is partitioned into 2, one part must have at least 7 independent sets which is then partitioned into two parts where one part has at least 4 independent sets. Hence we cannot improve on the running time with a different twice-partitioning strategy.

We also note that the current best known quantum algorithms for checking 13-colorability, 7-colorability and 4-colorability \cite{ShimizuM22} do not improve our running time. When iterating through $S_1 \in (X, r(\cF, X))$, we only need to consider $S_1$ with $|S_1| \ge |X| - \alpha \cdot n$ so we can use an improved upper bound \cite{Byskov04} when we only need to consider $|S_1| > |X| \mathbin{/} 3$. However, this is only the case when $|X| > \frac{3}{2} \cdot \alpha \cdot n$; so it does not affect our overall running time as $|X| = \frac{4}{13} \cdot n$ in the worst case.


The pseudocode for the algorithm in this section can be found in \autoref{sec:pseudocode}.

\section{Enumeration of Minimal Subset Dominating Sets}
\label{sec:dom}

In this section, we prove that the number of minimal dominating sets of a graph that are subsets of some linear-sized subset of vertices $X$ is at most $O^*\left((2-\varepsilon)^{|X|}\right)$. Moreover, they can be enumerated by a simple branching algorithm whose running time is within a polynomial factor of this bound.

\begin{theorem}\label{thm:mds}
    There is a simple branching algorithm, which, given any graph $G=(V,E)$ and any subset of vertices $X\subseteq V$ with $|X|\ge d\cdot |V|$ for some $d>0$,
    enumerates all minimal dominating sets of $G$ that are subsets of $X$ in $O^*\left((2-\varepsilon_d)^{|X|}\right)$ time, for some $\varepsilon_d>0$.
\end{theorem}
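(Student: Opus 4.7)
The plan is to design a simple branching algorithm that enumerates the minimal dominating sets $D \subseteq X$ of $G$, and bound its running time by $(2-\varepsilon_d)^{|X|}$ via measure-and-conquer. Let $Y = V \setminus X$ and $\beta = (1-d)/d$, so that $|Y| \le \beta|X|$.

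When $d$ is close to $1$, the result falls out of the classical FGPS algorithm \cite{FominGPS08} directly: it enumerates every minimal dominating set of $G$ in $O^*(1.7159^{|V|})$ time, and filtering to those contained in $X$ uses $|V| \le |X|/d$ to give $O^*(1.7159^{|X|/d}) = O^*((2-\varepsilon_d)^{|X|})$ with $\varepsilon_d := 2 - 1.7159^{1/d} > 0$ whenever $d > \log_2 1.7159 \approx 0.779$. This is a simple branching algorithm and therefore fits the framework.

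For general $d > 0$, I would design a branching algorithm tailored to the $D \subseteq X$ constraint. It branches only on $X$-vertices: an undecided $v \in X$ yields two subcases, $v \in D$ and $v \notin D$. The $Y$-vertices are handled purely by reductions: if an undominated $y \in Y$ has a unique undecided $X$-neighbor $x$, then $x$ is forced into $D$; if $y$ has no undecided $X$-neighbor at all, the branch is infeasible. Minimality is checked at the leaves by verifying that each $v \in D$ has a private neighbor in $N_G[v]$. For the analysis I would adopt FGPS-style measure-and-conquer: assign weights $w_X(s) \in [0,1]$ to the states $s$ of $X$-vertices (undecided, dominated, rejected, and so on) and smaller weights $w_Y(s) \in [0,\gamma]$ to the states of $Y$-vertices, with $\gamma = \gamma(d) < 1$. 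The initial measure is $\mu_0 \le (1+\gamma\beta)|X|$, and a case analysis on the branching vertex's residual $X$- and $Y$-degrees should yield recurrences of the form $T(\mu) \le T(\mu - \Delta_a) + T(\mu - \Delta_b)$ whose solution is $O^*(c^\mu)$ for some $c = c(d) < 2$, giving total running time $O^*(c^{(1+\gamma\beta)|X|})$.

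The main obstacle will be selecting $\gamma$ and the other weights so that $c^{1+\gamma\beta} < 2$ uniformly for every $d > 0$. The difficult cases arise when an $X$-vertex $v$ has few $X$-neighbors but many $Y$-neighbors: putting $v$ into $D$ only changes the states of $Y$-neighbors (each contributing at most $\gamma$ to $\mu$), so the measure decrease in the "$v \in D$" branch is small and threatens the $c < 2$ bound. Controlling this likely requires extra reduction rules for low $X$-degree vertices -- for instance, forcing inclusions whenever $Y$-domination obligations make them essential, or folding degree-$1$ structures within $G[X]$ -- together with a choice of $\gamma = \gamma(d)$ that decays fast enough as $d \to 0$ to counteract the growing $\beta = (1-d)/d$. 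The entire construction must remain a simple branching algorithm in the sense of \autoref{sec:divide}, so that the improved running time can be compounded with Grover's search when used as a subroutine for Domatic Number.
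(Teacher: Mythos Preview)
Your warm-up for $d > \log_2 1.7159$ is fine, but the general-$d$ argument is only a plan, and the obstacle you flag at the end is exactly the one that decides the matter. Branching on an $X$-vertex $v$ with no undecided $X$-neighbours and many $Y$-neighbours gives, in your measure, a decrease of $1$ in the $v\notin D$ branch and only $1+\gamma\cdot\deg_Y(v)$ in the $v\in D$ branch; to beat $c<2$ you would need $\gamma\cdot\deg_Y(v)$ bounded away from $0$, yet to keep $c^{1+\gamma\beta}<2$ as $d\to 0$ you need $\gamma\to 0$. The ``extra reduction rules'' you allude to (forcing, folding) do not remove this tension: a $Y$-vertex may have arbitrarily many undecided $X$-neighbours, so none of them is forced, and $G[X]$ can be edgeless. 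Without a concrete rule that produces a nontrivial measure drop in this configuration, the recurrence degenerates to $T(\mu)\le 2\,T(\mu-1)$ and you are stuck at $c=2$.

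The paper resolves precisely this case by passing to the set-cover formulation $(U,\cF)$ with $U=V$ and $\cF=\{N_G[x]:x\in X\}$, so that $|U|\le r|\cF|$ with $r=1/d$, and using the measure $\mu=(1-(r+1)\varepsilon)|\cF|+\varepsilon|U|$. The vertex that troubles you becomes a \emph{large set} $S$ with $|S|\ge 3(r+1)$, and the single branch ``take $S$ / discard $S$'' now drops $\mu$ by $1-(r+1)\varepsilon$ in one branch and by $1-(r+1)\varepsilon+|S|\varepsilon\ge 1+2(r+1)\varepsilon$ in the other, which suffices. Dually, an element of frequency at most $3(r+1)$ is handled by branching on all $2^{|\mathcal S|}-1$ nonempty choices of covering sets, each removing all of $\mathcal S$. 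When neither rule applies, every set has size $<3(r+1)$ and every element frequency $>3(r+1)$, whence $|\cF|\ge|U|$ and the FGPS bound $1.7159^{|\cF|}$ kicks in. The whole argument is a simple branching algorithm and yields an explicit $\varepsilon_r>0$ for every $r$, hence an explicit $\varepsilon_d>0$ for every $d>0$.
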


The theorem will follow from a slightly more general theorem about minimal set covers of a set system.
From a graph $G=(V,E)$ and a vertex subset $X\subseteq V$, we obtain a set system $(U,\cF)$ where $U=V$ and a set $S_x\in \cF$ for each $x\in X$ that contains the closed neighborhood of vertex $x$ in the graph $G$: $S_x = N_G[x]$.
Then, there is a 1-to-1 correspondence between inclusion-wise minimal dominating sets in $G$ that are subsets of $X$ and inclusion-wise minimal set covers of $(U,\cF)$.

\begin{theorem}\label{thm:sc}
    There is a simple branching algorithm, which, given any set system $(U,\cF)$ with $|U|\le r\cdot |\cF|$ for some $r>0$,
    enumerates all minimal set covers of $(U,\cF)$ in $O^*\left(2^{(1-\varepsilon_r)\cdot |\cF|}\right)$ time, for some $\varepsilon_r>0$.
\end{theorem}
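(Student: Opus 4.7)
The plan is to dualize the problem and then combine an output-sensitive counting bound with a simple branching algorithm that becomes efficient once standard reduction rules are applied. First, I would form the dual hypergraph $H$ with vertex set $\cF$ and edge set $\{N_H(u) : u \in U\}$, where $N_H(u)=\{S\in\cF : u\in S\}$. Inclusion-minimal set covers of $(U,\cF)$ correspond bijectively to inclusion-minimal transversals (hitting sets) of $H$, which has $n=|\cF|$ vertices and at most $|U|\le rn$ distinct edges. The key structural observation is that any minimal transversal $\mathcal{C}$ satisfies $|\mathcal{C}|\le |E(H)|\le rn$, because each $v\in\mathcal{C}$ must be the unique hitter of some edge (otherwise $v$ could be removed from $\mathcal{C}$, contradicting minimality).

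The algorithm performs exhaustive reductions and then branches. The reductions are: (i) if some edge is a singleton $\{v\}$, then $v$ lies in every transversal, so include $v$ and delete $v$ together with every edge through $v$; and (ii) if one edge strictly contains another, the larger one is hit whenever the smaller one is, so delete it. After exhausting these rules, every edge has size at least $2$ and the edges form an antichain. To branch, I would select an edge $e$ of minimum size $f\ge 2$ and split into the $2^f-1$ sub-branches indexed by the non-empty subsets $T\subseteq e$ with $\mathcal{C}\cap e=T$; each sub-branch fixes the $\mathcal{C}$-membership of all $f$ vertices of $e$. The resulting algorithm is \emph{simple} in the sense of \autoref{sec:pre} because each root-to-leaf path is explored independently.

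For the analysis I would treat two regimes in $r$. When $r<1/2$, I would use $|\mathcal{C}|\le rn$ to early-terminate any branch whose partial transversal already contains $\lfloor rn\rfloor$ vertices; the algorithm then visits at most $\sum_{k\le rn}\binom{n}{k}=O^*\!\left(2^{H(r)n}\right)$ leaves, where $H$ is the binary entropy, and $\varepsilon_r=1-H(r)>0$. When $r\ge 1/2$ the counting bound gives nothing, so I would rely on the edge-branching recurrence $T(n)\le(2^f-1)\,T(n-f)$, whose base $(2^f-1)^{1/f}$ is strictly below $2$ for every finite $f\ge 1$. A measure-and-conquer analysis that weights vertices of $H$ according to their degree, together with the fact that $|E(H)|\le rn$ forces a low-arity edge to exist by averaging, would then yield a uniform base $2^{1-\varepsilon_r}<2$.

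The main obstacle will be making the branching analysis produce an honest $\varepsilon_r>0$ in the regime $r\ge 1/2$: the per-vertex base $(2^f-1)^{1/f}$ tends to $2$ as $f\to\infty$, so one must rule out branching on edges of unbounded size. The technical core is to show, using the antichain property from reduction (ii) together with the edge-count bound $|E(H)|\le rn$, that some edge of size bounded in terms of $r$ alone always exists after reduction, or alternatively to calibrate a potential function whose per-branch decrease stays $\Omega(1)$ uniformly in $f$. Matching the $\varepsilon_r$ coming from the counting regime against the one coming from the branching regime, so that they meet at $r=1/2$ with a common positive value, will be the delicate bookkeeping step.
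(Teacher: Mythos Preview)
Your $r<1/2$ case is essentially sound, but the $r\ge 1/2$ case has a real gap. The claim that the antichain property together with $|E(H)|\le rn$ forces a low-arity edge is false: take $H$ on $n$ vertices with edges $e_i=V(H)\setminus\{v_i\}$ for $i=1,\dots,n$ (so $r=1$). This is a singleton-free antichain in which every edge has size $n-1$, and the very first branching step of your algorithm already spawns $2^{n-1}-1$ children. Since all vertices have the same degree $n-1$, any degree-weighted potential is uniform, so your alternative fix also fails---and in fact no analysis can help once the search tree already has $\Omega(2^n)$ leaves. For larger $r$ one can take $\Theta(rn)$ distinct $\lfloor n/2\rfloor$-subsets instead. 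The algorithm itself, not merely its analysis, is too weak with only edge-branching available.

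The paper supplies a second branching rule and a fallback, tied together by the measure $\mu(U,\cF)=(1-(r+1)\varepsilon)\,|\cF|+\varepsilon\,|U|$. If some $S\in\cF$ has $|S|\ge 3(r+1)$, branch on including versus excluding $S$; inclusion deletes at least $3(r+1)$ elements, and the gain in the $|U|$-term of $\mu$ compensates for the single-unit drop in $|\cF|$. Your edge-branching (primally: branching over all non-empty subfamilies of the sets containing a fixed element) is invoked only when that element has frequency at most $3(r+1)$, so its fan-out is bounded in terms of $r$ alone. When neither rule applies, every set is small and every frequency is high, which forces $|\cF|\ge|U|$; there the $1.7159^{|\cF|}$ enumeration of Fomin, Grandoni, Pyatkin and Stepanov is invoked as a black box. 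Your scheme is missing both the large-set branching (dually: branching on a high-degree vertex of $H$) and this fallback; without at least one further ingredient beyond edge-branching, the instance above cannot be handled in $2^{(1-\varepsilon)n}$ time.
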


\cite{FominGPS08} proved \autoref{thm:mds} for $X=V$ and \autoref{thm:sc} for $r=1$. In particular, their algorithm enumerates all minimal dominating sets of a graph on $n$ vertices in $O(1.7159^n)$ time.

\begin{proof}
    Let $r>0$.
    Let $\varepsilon = \frac{3(r+1)-\log (2^{3(r+1)}-1)}{3(r+1)^2}$.
    Consider the measure
    \begin{align*}
        \mu(U,\cF) = (1-(r+1)\varepsilon) |\cF| + \varepsilon |U|
    \end{align*}
    which associates a weight of $1-(r+1)\varepsilon>0$ to each set and a weight of $\varepsilon>0$ to each element of a set system $(U,\cF)$.
    We will show that every set system $(U,\cF)$ has at most $2^{\mu(U,\cF)}$ minimal set covers by induction on $|\cF|$.
    For a set system with $|U|\le r\cdot |\cF|$, this number is $2^{(1-(r+1)\varepsilon) |\cF| + \varepsilon |U|} \le  2^{(1-\varepsilon)\cdot |\cF|}$.
    The proof can easily be turned into a simple branching algorithm enumerating all minimal set covers whose search tree is the induction tree of this proof.
    
    The statement trivially holds when $|\cF|=0$ since such an instance has at most $1$ minimal set cover.
    For the induction, we consider three cases.
    \begin{enumerate}
        \item $\cF$ contains a set $S$ with $|S|\ge 3(r+1)$. Any minimal set cover either contains $S$ or not. Those that do not contain $S$ are also minimal set covers of $(U,\cF\setminus \{S\})$ and those that contain $S$ are made up of $S$ and a minimal set cover of $(U\setminus S, \cF\setminus \{S\})$. For the induction, we would like that
        \begin{align*}
            &&2^{\mu(U,\cF\setminus \{S\})} + 2^{\mu(U\setminus S, \cF\setminus \{S\})} &\le 2^{\mu(U, \cF)}\\
            &\iff&2^{\mu(U,\cF)-(1-(r+1)\varepsilon)} + 2^{\mu(U, \cF)-(1-(r+1)\varepsilon)-|S|\cdot \varepsilon} &\le 2^{\mu(U, \cF)}\\
            &\iff&2^{-1+(r+1)\varepsilon} + 2^{-1+(r+1)\varepsilon-|S|\cdot \varepsilon} &\le 1\\
            &\DOTSB\;\Longleftarrow\;&2^{-1+(r+1)\varepsilon} + 2^{-1-2(r+1)\varepsilon} &\le 1,
        \end{align*}
        and this inequality holds because $\varepsilon \le \frac{1}{r+1} \log\left(\frac{1+\sqrt{5}}{2}\right)$.
        
        \item There is an element $u\in U$ with frequency at most $3(r+1)$, i.e., $u$ occurs in at most $3(r+1)$ sets of \cF. Denote the sets that contain $u$ by $\mathcal{S} = \{S\in \cF : u\in S\}$. Since $u$ needs to be covered, each set cover contains at least one set from $\mathcal{S}$. We use induction on all $2^{|\mathcal{S}|}-1$ choices of including at least one set from $\mathcal{S}$ into the set covers and excluding the remaining sets from $\mathcal{S}$; each such choice leads to a set cover instance where we remove all sets in $\mathcal{S}$, and we remove all elements covered by the sets that are included in the set covers. Each such choice reduces the measure $\mu$ by more than $|\mathcal{S}|\cdot (1-(r+1)\varepsilon)$. Now, we would therefore like that
        \begin{alignat*}{2}
            &&(2^{|\mathcal{S}|} -1)\cdot 2^{\mu(U,\cF)-|\mathcal{S}|\cdot (1-(r+1)\varepsilon)} &\le 2^{\mu(U,\cF)}\\
            &\iff&(2^{|\mathcal{S}|} -1)\cdot 2^{-|\mathcal{S}|\cdot (1-(r+1)\varepsilon)} &\le 1\\
            &\iff&2^{-|\mathcal{S}|\cdot (1-(r+1)\varepsilon)} &\le (2^{|\mathcal{S}|} -1)^{-1}\\
            &\iff&-|\mathcal{S}|\cdot (1-(r+1)\varepsilon) &\le -\log(2^{|\mathcal{S}|} -1)\\
            &\iff&(r+1)\varepsilon-1 &\le \frac{-\log(2^{|\mathcal{S}|} -1)}{|\mathcal{S}|}\\
            &\iff&\varepsilon &\le \frac{|\mathcal{S}|-\log(2^{|\mathcal{S}|} -1)}{(r+1) |\mathcal{S}|}
        \end{alignat*}
        Note that $\frac{|\mathcal{S}|-\log(2^{|\mathcal{S}|} -1)}{(r+1) |\mathcal{S}|}$ decreases when $|\mathcal{S}|$ increases, and for the maximum possible value of $|\mathcal{S}|$, which is $3(r+1)$, the inequality holds with equality for the value of $\varepsilon$ given in the beginning of the proof.
        
        \item It remains to consider the case where all sets have size less than $3(r+1)$ and all elements have frequency more than $3(r+1)$. Since the sum of set sizes equals the sum of element frequencies, we have that $|\cF| \ge |U|$. Here, we use the result of Fomin et al. \cite{FominGPS08} who proved that the number of minimal set covers is at most $1.7159^{|\cF|}$. For the induction, we would like that
        \begin{alignat*}{2}
            &&1.7159^{|\cF|} &\le 2^{\mu(U,\cF)}\\
            &\iff& 2^{|\cF| \log 1.7159} &\le 2^{(1-(r+1)\varepsilon) |\cF| + \varepsilon |U|}\\
            &\DOTSB\;\Longleftarrow\;&0.779 |\cF| &\le (1-(r+1)\varepsilon) |\cF|\\
            &\iff& \varepsilon \le \frac{0.221}{r+1}
        \end{alignat*}
        and this inequality holds for the value of $\varepsilon$ given in the beginning of the proof.
    \end{enumerate}
    This concludes the proof of the theorem.
\end{proof}

\autoref{thm:main-cover} now lets us conclude the following.

\begin{corollary}
    There is a bounded-error quantum algorithm which computes the domatic number of any graph on $n$ vertices in $O((2-\varepsilon)^n)$ time for some constant $\varepsilon>0$.
\end{corollary}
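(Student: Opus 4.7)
My plan is to cast Domatic Number as a Set Packing problem and apply \autoref{thm:main-cover}, with \autoref{thm:mds} supplying the enumeration algorithm. I take $U = V$ and let $\cF$ be the family of inclusion-minimal dominating sets of $G$; since minimality is testable in polynomial time, $\Phi$ is polynomial-time computable. The domatic number of $G$ equals the largest $k$ for which $(U,\cF)$ admits a $k$-packing: any partition of $V$ into $k$ dominating sets contracts to $k$ disjoint minimal dominating sets, and conversely any $k$ disjoint minimal dominating sets extend to such a partition by arbitrarily assigning leftover vertices. I determine the domatic number by scanning $k \in \{1,\dots,n\}$ and invoking the decision procedure below for each, incurring only a polynomial overhead.

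For the enumeration algorithm $A$ required by \autoref{thm:main-cover}, I set $e(X,I) := r(\cF,X)$, the minimal dominating sets of $G$ contained in $X$. This choice trivially satisfies the hypothesis: any $k$-packing of the sub-instance $(X, r(\cF,X))$ encountered in the divide-and-conquer consists entirely of sets from $r(\cF,X)$, so its first member automatically lies in $e(X,I)$. I produce $A$ via \autoref{thm:mds} instantiated with $d = 1/4$, yielding a simple branching algorithm that enumerates $e(X,I)$ in $O^*\bigl((2-\varepsilon_{1/4})^{|X|}\bigr)$ time for every $X$ with $|X| \ge n/4$.

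The main (mild) obstacle is the mismatch between the hypothesis of \autoref{thm:main-cover}, which literally asks for an $O^*(c^{|X|})$ enumeration bound on \emph{all} $X \subseteq U$, and the guarantee of \autoref{thm:mds}, which only gives such a bound for linear-sized $X$. I will resolve this by observing that the divide-and-conquer inside \autoref{thm:main-cover} only invokes $A$ on subsets of size at least $n/4$; smaller subsets are resolved by the QRAM table built in the preprocessing step of \autoref{thm:preprocessing}. Inspecting the running-time sum in the proof of \autoref{thm:main-cover} confirms that only the contribution $c^{l'}$ for $l' \ge n/4$ is ever accumulated, so the theorem goes through with effective constant $c := 2 - \varepsilon_{1/4} < 2$.

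Plugging $c$ into \autoref{thm:main-cover} then yields a bounded-error quantum algorithm running in $O^*\bigl(\binom{n}{n/4} + (2+c)^{n/2}\bigr) = O^*\bigl((4-\varepsilon_{1/4})^{n/2}\bigr)$ time. Since $\sqrt{4 - \varepsilon_{1/4}} < 2$, this is $O\bigl((2-\varepsilon)^n\bigr)$ for $\varepsilon = 2 - \sqrt{4 - \varepsilon_{1/4}} > 0$, proving the corollary. It remains to verify that the branching algorithm from \autoref{thm:mds} is \emph{simple} in the sense of the preliminaries---its search tree is precisely the induction tree of the measure-based argument, with each branching decision made from local information about the current set system and no inter-branch memoization or learning---so Grover's search indeed provides the quadratic quantum speedup that \autoref{thm:main-cover} assumes.
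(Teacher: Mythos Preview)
Your proposal is correct and follows the paper's own approach: the paper simply states that \autoref{thm:main-cover} lets us conclude the corollary, and you carry out exactly this combination of \autoref{thm:main-cover} with \autoref{thm:mds}. You additionally spell out the one nontrivial point the paper leaves implicit, namely that \autoref{thm:mds} only guarantees a $O^*((2-\varepsilon_d)^{|X|})$ enumeration for linear-sized $X$, whereas \autoref{thm:main-cover} is phrased for all $X$; your observation that the divide-and-conquer never needs to call $A$ on sets of size below $n/4$ (smaller sets being handled by the QRAM table) is precisely the intended resolution and is consistent with the running-time sum in the proof of \autoref{thm:main-cover}.
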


\pagebreak
\pagebreak
\appendix
\section{A closer look at the running time for small $c$}
\label{sec:smallc-details}

\begin{figure}[H]
    \begin{tikzpicture}[baseline=(current bounding box.north)]
        \begin{axis}[
            title=Running time for various values of $c$,
            xlabel={$c$}, ylabel={Base $b$ of running time $O^*(b^n)$},
            ymin=1.725, ymax=1.775,yticklabel style={/pgf/number format/.cd,fixed,fixed zerofill,precision=3,},
            xmin=1, xmax=1.15,
            legend style = {
                at = {(0.95,0.05)},
                anchor = south east,
                draw = none,
            },width=\textwidth,
            height=\textwidth
            ]
            \addplot[purple, domain=1:2.0, samples=100, thick,] {(2+x)^(0.5)}; \addlegendentry{$(2+c)^{n/2}$} 
            \addplot[ red, domain=1.08724:1.1479, samples=80, thick,] {max(1.7548,(1+x)^(0.75))}; \addlegendentry{\autoref{thm:smallc-cover}}
            \addplot[blue, thick,] table {smallc.dat}; \addlegendentry{\autoref{thm:smallc-cover2}}
        \end{axis}
    \end{tikzpicture}
    \caption{\label{fig:smallcdet} Visual presentation of running times of \autoref{thm:smallc-cover} and \autoref{thm:smallc-cover2}.}
\end{figure}
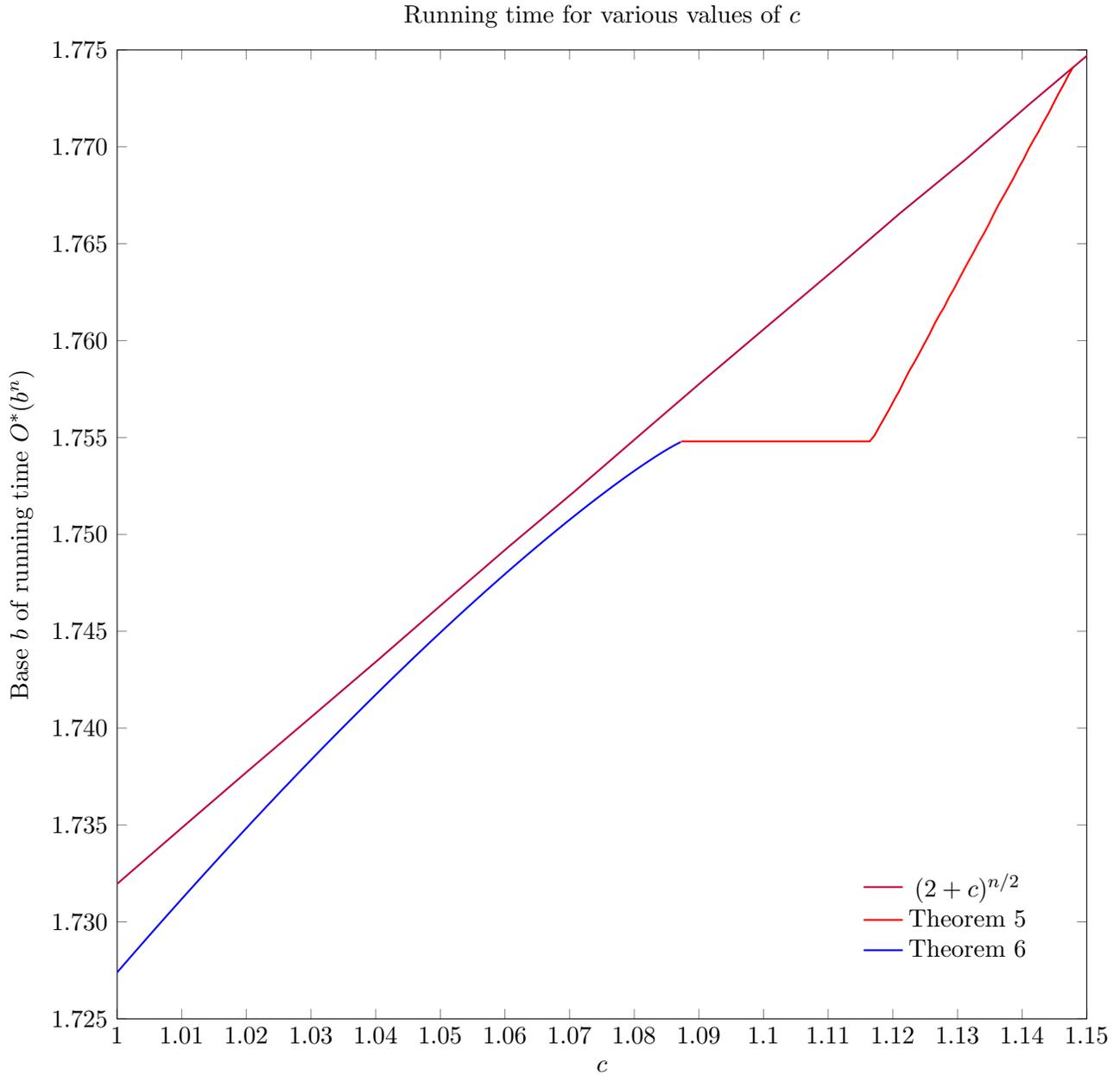

When $c < -\frac{2}{3} + \sqrt[3]{\frac{47}{54}-\frac{\sqrt{93}}{18}} + \sqrt[3]{\frac{47}{54}+\frac{\sqrt{93}}{18}} \approx 1.147899$, then choosing the best algorithm among \autoref{thm:smallc-cover} and \autoref{thm:smallc-cover2} gives an algorithm running in $O^*(b^n)$ time for some $b< \sqrt{c+2}$.
This can be seen visually in \autoref{fig:smallcdet} and proved rigorously, e.g., by interleaving a stepwise function between the function $(c+2)^{n/2}$ and the function from \autoref{thm:smallc-cover2} when $c\le 1.08$:
\begin{align*}
    \mathsf{steps}(c,n) &=
    \begin{cases}
        3^{n/2} & \text{if } c\le 1.0123\\
        1.735595^n & \text{if } 1.0123 \le c\le 1.0221\\
        \dots\\
        1.7533^n & \text{if } 1.0742 \le c\le 1.08.
    \end{cases}
\end{align*}
When $1.08 \le c \le 1.147899$, then \autoref{thm:smallc-cover} gives an algorithm with running time $O^*(d^n)$ for some $d<\sqrt{c+2}$.

\section{Pseudocode}
\label{sec:pseudocode}

\begin{algorithm}[h]
    \caption{\label{algo:precomp}Preprocessing for computing the chromatic number of $G$}
    \begin{algorithmic}[1]
        \Function{Precomp}{$G=(V,E)$}
        \State $a [\emptyset], \chi [\emptyset] \gets 0$
        \For {$S \subseteq V$, $|S| \le \lfloor 0.27 \cdot n \rfloor$} ($S$ is earlier than $S'$ if $S \subset S'$)
        \If {$S \neq \emptyset$}
        \State $v \gets $ arbitrary vertex of $S$
        \State $a[S] \gets a[S \setminus \{v\}] + a[S \setminus N[v]] + 1$ \Comment{Number of independent sets in $S$}
        \EndIf
        \EndFor
        \For {$k = \lfloor 0.27 \cdot n \rfloor, \lfloor 0.27 \cdot n \rfloor - 1, \ldots, 1$} 
        \For {$S \subseteq V$, $|S| \le \lfloor 0.27 \cdot n \rfloor$}
        \State $g_0[S] \gets (-1)^{|S|} (a[S])^k$
        \EndFor
        \For {$i = 1, 2, \ldots, n$}
        \For {$S \subseteq V$, $|S| \le \lfloor 0.27 \cdot n \rfloor$}
        \If {$v_i \in S$}   
        \State $g_i[S] = g_{i - 1}[S] + g_{i - 1}[X \setminus \{v_i\}]$
        \Else
        \State $g_i[S] = g_{i - 1}[S]$
        \EndIf
        \EndFor
        \EndFor
        \For {$S \subseteq V$, $|S| \le \lfloor 0.27 \cdot n \rfloor$}
        \If {$(-1)^{|S|} g_n[S] > 0$} \Comment{Number of $k$-covers of $S$}
        \State $\chi [S] \gets k$ \Comment{Chromatic number of $S$}
        \EndIf
        \EndFor
        \EndFor
        \EndFunction
    \end{algorithmic}
\end{algorithm}

In this section, we present the pseudocode for Chromatic number algorithm of \autoref{sec:chromatic}. As checking for a $k$-partitioning of independent subsets is equivalent to checking for a $k$-covering of maximal independent subsets, we can use the algorithm for computing $c_k(\cF,X)$ in \autoref{thm:preprocessing}. As the downward closure of the set of maximal independent subsets are independent subsets, we can adapt the preprocessing to directly compute the number of independent subsets of a set $X$.

In \autoref{algo:precomp}, the input is the graph $G=(V,E)$ with vertex set $V=\set{v_1,\dots,v_n}$.

We now present the main portion of the algorithm in \autoref{algo:main}. Grover's search is used to quadratically speed up the branching and is applied to every for loop. Here, instead of iteratively updating $c$, all values $1 \le c \le n$ are considered and $c$ is set to the minimum possible. In \textsc{Col}, we use fast quantum $k$-coloring algorithms presented in \cite{Furer08,ShimizuM22}.

\begin{algorithm}[h]
    \caption{\label{algo:main} Main algorithm for computing the chromatic number of $G$}
    \begin{algorithmic}[1]
        \Function{Main}{$G$}
        \State \textsc{Precomp}$(G)$
        \State $c \gets |V|$
        \For{$S \subseteq V$, $0.48 \cdot n \le |S|$} \Comment{Case \ref{cs:5col}}
        \If {\textsc{Col}$(5, \,S)$}
        \State $k \gets \min_{x \in \set{1, 2, 3, 4, 5}} \set{\textsc{Col}(x, \,S)}$
        \State $c \gets \min \{c,\, k + \textsc{Chr1a}(V \setminus S)\}$
        \EndIf
        \EndFor
        \For{$S \subseteq V$, $0.48 \cdot n \le |S| < 0.576 \cdot n$} \Comment{Case \ref{cs:6col}}
        \If {\textsc{Col}$(6, \,S)$}
        \State $c \gets \min \{c,\, 6 + \textsc{Chr1a}(V \setminus S)\}$
        \EndIf
        \EndFor
        \For{$S \subseteq V$, $0.5 \cdot n \le |S| < \frac{7 \cdot n}{13}$} \Comment{Case \ref{cs:smallT}}
        \If {$|S| < \frac{48 \cdot n}{91}$}
        \State $c \gets \min \{c,\, \textsc{Chr1a}(S) + \textsc{Chr1a}(V \setminus S)\}$
        \Else
        \State $c \gets \min \{c,\, \textsc{7Col}(S) + \textsc{Chr1a}(V \setminus S)\}$
        \EndIf
        \EndFor
        \For{$S \subseteq V$, $\frac{6 \cdot n}{13} \le |S| < 0.5$} \Comment{Case \ref{cs:bigT}}
        \State $c \gets \min \{c,\, \textsc{Chr1a}(S) + \textsc{Chr1b}(V \setminus S)\}$
        \EndFor
        \State \Return $c$
        \EndFunction
        \\
        \Function{Col}{$k$, $S$}
        \State \Return $k$-colorability of $S$ using \cite{Furer08,ShimizuM22}
        \EndFunction
    \end{algorithmic}
\end{algorithm}

\pagebreak
We now present the implementation for \textsc{Chr1a}, \textsc{Chr1b} and \textsc{7Col} in \autoref{algo:chr1}. 
Similarly to above, Grover's search is used to quadratically speed up the branching and is applied to every for loop. In \textsc{Chr2}, Grover's search is applied to a classical branching algorithm that enumerates the maximal independent sets of $G[S]$ in $O^*(3^{1/3})$ \cite{FominK10}.

\begin{algorithm}[h]
    \caption{\label{algo:chr1} Implementations of \textsc{Chr1a}, \textsc{Chr1b} and \textsc{7Col}}
    \begin{algorithmic}[1]
        \Function{Chr1a}{$S$}
        \State $c \gets |S|$
        \For{$T \subseteq S$, $\frac{|S|}{2} \le |T|$}
        \State $c \gets \min \{c,\textsc{Chr2}(T) + \textsc{Chr2}(S \setminus T)\}$
        \EndFor
        \State \Return $c$
        \EndFunction
        \\
        \Function{Chr1b}{$S$} \Comment{Algorithm used for $(R, r(\cF, R))$ in Case \ref{cs:bigT}}
        \State $c \gets |S|$
        \For{$T \subseteq S$, $\frac{3 \cdot |S|}{7} \le |T| \le \frac{|4| \cdot |S|}{7}$}
        \State $c \gets \min \{c,\textsc{Chr2}(T) + \textsc{Chr2}(S \setminus T)\}$
        \EndFor
        \State \Return $c$
        \EndFunction
        \\
        \Function{7Col}{$S$} \Comment{Algorithm presented in \autoref{lma:7color}}
        \For{$T \subseteq S$, $\frac{3 \cdot |S|}{7} \le |T|$}
        \If{\textsc{Col}$(3, \,T)$ \textbf{and} \textsc{Chr2}$(S \setminus T) \le 4$}
        \State \Return $7$
        \EndIf
        \EndFor
        \State \Return $|S|$
        \EndFunction
        \\
        \Function{Chr2}{$S$}
        \State $c \gets |S|$
        \For{$T \in \textsc{Mis}(G[S])$, $|S \setminus T| \le \lfloor 0.27 \cdot n \rfloor$}
        \State $c \gets \min \{c, 1 + \chi [S \setminus T]\}$
        \EndFor
        \State \Return $c$
        \EndFunction
    \end{algorithmic}
\end{algorithm}

\pagebreak

\bibliography{references}
\bibliographystyle{alphaurl}

\end{document}